\documentclass[aps,prl,onecolumn,superscriptaddress,floatfix,nofootinbib,showpacs,longbibliography]{revtex4-2}
\usepackage{amsmath}

\usepackage[utf8]{inputenc}  
\usepackage[T1]{fontenc}     %Output what you want e.g., é, ł, a, ü
\usepackage[table]{xcolor}
\usepackage[british]{babel}  %Do hyphenation according to british english
\usepackage[sc,osf]{mathpazo}
\usepackage[scaled=0.86]{berasans}  % URL font that go well wtih palatino
\usepackage[colorlinks=true, citecolor=blue, urlcolor=blue]{hyperref}  
\usepackage{graphicx} % Package to insert exteral figures
\usepackage[babel]{microtype}  %Improves text justification
\usepackage{amsmath,amssymb,amsthm,bm,amsfonts,mathrsfs,bbm} %Usefull math packages

\usepackage{xspace}  %Useful to add space in macros
\usepackage{pgf,tikz}
\usepackage[dvipsnames]{xcolor}
\usepackage{multirow}
\usepackage{array}
\usepackage{bigstrut}
\usepackage{braket}
\usepackage{color}
\usepackage{natbib}
\usepackage{multirow}
\usepackage{mathtools}
\usepackage{float}
\usepackage[caption = false]{subfig}
\usepackage{xcolor,colortbl}
\usepackage{color}
\usepackage{rotating}
\usepackage{tikz}
\usepackage{ulem}
\usetikzlibrary{quantikz}
\usepackage{makecell}
\usepackage{cleveref}
\usepackage{pifont}

\newcommand{\be}{\begin{equation}}
\newcommand{\ee}{\end{equation}}
\newcommand{\ba}{\begin{eqnarray}}
\newcommand{\ea}{\end{eqnarray}}
\newcommand{\ketbra}[2]{|#1\rangle \langle #2|}

\newtheorem{theorem}{Theorem}

\newtheorem{definition}{Definition}
\newtheorem{proposition}{Proposition}

\newtheorem{lemma}{Lemma}

\def\>{\rangle}
\def\<{\langle}

\usepackage{centernot}
\usepackage{subfig}

\begin{document}

\title{ Local state antimarking : Nonlocality without entanglement}

\author{Biswadeep Chatterjee}  
\email{chatterjeebiswadeep74@gmail.com}
 \affiliation{S. N. Bose National Centre for Basic Sciences, Block JD, Sector III, Salt Lake, Kolkata 700 106, India}

\author{Tathagata Gupta}
\email{tathagatagupta@gmail.com}
\affiliation{Department of Physics, Indian Institute of Technology Madras, Chennai 600036, India}

\author{Pratik Ghosal}
\email{ghoshal.pratik00@gmail.com}
\affiliation{Harish-Chandra Research Institute, Chhatnag Road, Jhunsi, Allahabad - 211019, India}
\affiliation{Homi Bhabha National Institute, Training School Complex, Anushakti Nagar, Mumbai 400 094, India}

\author{Samrat Sen}
\email{samrat9sen5@gmail.com}
\affiliation{Scuola Normale Superiore, Piazza dei Cavalieri 7, 56126 Pisa, Italy}

\begin{abstract}
% A set of quantum states is said to be antidistinguishable if, upon being given a randomly chosen state, it is possible to identify a state that the system was definitively not prepared in. In this work, we investigate quantum nonlocality within this framework of local state antidistinguishability (LSAD), first noting that any ensemble of mutually orthogonal multipartite pure states is locally antidistinguishable—a result that underscores the inherent leniency of the exclusion task. We then introduce local state antimarking (LSAM), where spatially separated parties receive subsystems of a randomly selected, non-repetitive sequence of states and must definitively exclude at least one unsupplied sequence. These parties must then identify at least one sequence that was definitively not supplied. While mutual orthogonality remains a sufficient condition for LSAM, we find that global and local antimarking capabilities are fundamentally distinct. Crucially, we present a product-state ensemble lacking global antidistinguishability. Remarkably, while the state sequence in the corresponding LSAM task is globally antidistinguishable, it cannot be locally antidistinguished. This implies the parent states are not locally antimarkable, unlocking a form of nonlocality without entanglement. Furthermore, we demonstrate that both LSAD and LSAM provide a refined lens to compare the relative nonlocality of product-state ensembles that are classified as nonlocal under conclusive local state discrimination or conclusive local state marking frameworks.

A set of quantum states is said to be antidistinguishable if, upon being given a randomly chosen state, it is possible to identify a state that the system was definitively not prepared in. In this work, we begin with a study of quantum nonlocality within the framework of local state antidistinguishability (LSAD), and find that any ensemble of mutually orthogonal multipartite pure states is locally antidistinguishable. We then extend this paradigm by introducing the task of local state antimarking (LSAM), where a non-repetitive sequence from a known set of multipartite states is randomly selected and distributed to spatially separated parties who must identify at least one sequence that was not supplied using LOCC only. We present an ensemble of product states that is not globally antidistinguishable, but choosing states from it, without replacement, produces such sequences of states which are globally antidistinguishable but not locally--revealing a form of nonlocality without entanglement. Finally, we compare LSAD and LSAM with conclusive local state discrimination and conclusive local state marking. We demonstrate that no strict hierarchy exists between these paradigms: there exist product-state ensembles that permit one task while strictly forbidding the other, and vice versa.

\end{abstract}

\maketitle	
\subsection{Introduction} 

Quantum nonlocality without entanglement refers to the phenomenon where product quantum states cannot be distinguished as effectively by local operations and classical communication (LOCC) as they can by global measurements \cite{PeresWooters,BennettNLWE}. This was first reported by Peres and Wootters \cite{PeresWooters} for certain pure nonorthogonal product states; later, Bennett \textit{et al.} \cite{BennettNLWE} provided sets of orthogonal product states exhibiting the same feature, a discovery that has been intensely studied ever since \cite{Halder19,Halder18,Xu2016,Yang2015,Gao,Fei,Xu2017,Wang2017,Zhang,Zhang2015,Zheng,Zhang2016,Zhang2016Yongjun,Wang2017arxiv,Zhang2017Luo,Zhang2017Tan,Zuo2021,Zhang2024,Zhu2022,Zhang2021,Cao2025,Feng,Cohen,Croke,Childs2013,MassarPop,Tian,Rinaldis,Cosentino2013,Ghosh2004,Bandyopadhyay2013,Walgate2002,Fan,Nathanson2005,Watrous,Hayashi2006,Bandyopadhyay2011,Yu}. In contrast to Bell nonlocality \cite{Bell1964,BellAPS,Mermin,BellReview}—which concerns spatial correlations that defy local-causal explanations and is fundamentally tied to entanglement—this form of nonlocality is informational in nature. Information may be encoded locally in a composite quantum system, yet retrieving it can strictly require global measurements. The resulting gap between global and local information extraction limits is what gives rise to this distinct notion of "nonlocality." % and gained prominence following the  work of Bennett {\it et al.} \cite{BennettNLWE}, wherein even sets of orthogonal product states  state discrimination is the task of identifying an unknown quantum state drawn from a known ensemble \cite{chefles2000quantum, bergou2010discrimination, barnett2009quantum, bae2015quantum}. It is one of the most fundamental primitives in quantum information theory, yet in general it is highly nontrivial and admits no closed-form solution. In the distant-laboratories paradigm, the task is distributed among multiple parties, each of whom holds a share of a multipartite quantum state. Their goal is to identify the global state using only local operations and classical communication (LOCC)  \cite{Chitambar2014Everything}, with no quantum communication allowed. A seminal discovery in this setting is the phenomenon of quantum nonlocality without entanglement , first highlighted by Peres and Wootters \cite{PeresWooters} and gained prominence following the  work of Bennett {\it et al.} \cite{BennettNLWE}, wherein even sets of orthogonal product states cannot always be distinguished as effectively by LOCC as by global measurements \cite{PeresWooters,BennettNLWE}. In contrast to Bell nonlocality \cite{Bell1964,BellAPS,Mermin,BellReview}, which concerns spatial correlations between measurement outcomes, this form of nonlocality is informational in nature: information may be encoded locally in a composite quantum system, yet retrieving it can require genuinely global measurements. The resulting gap between global and local information extraction is what gives rise to this notion of “nonlocality.”

Understanding how far this form of nonlocality extends has attracted considerable recent attention \cite{Bennett99(1),DiVincenzo03,Niset06,duanCLSD,Calsamiglia10,Bandyopadhyay,Chitambar14,Halder18,Demianowicz18,Halder19,Halder19(1),Agrawal19,Rout19,Bhattacharya20,Banik20,Rout20,LSM,Subhendu,Tathagata1,Pratik,Subhendu2,Bandyopadhyay2024,tathagata2025,russo2026}, owing to its importance in fundamental implications for not jus foundational understanding of quantum mechanics, but also for the applications in cryptographic protocols \cite{Gagliardoni2021Quantum,Bennett1984Quantum,Markham2008Graph,Matthews2009Distinguishability}. One way to probe this is by considering a different task and seeing which state sets allow it by LOCC and which do not. State sets that do not allow less demanding tasks to be performed are said to be more nonlocal than sets that do. As examples, consider the task of quantum state exclusion \cite{bandyopadhyay2014conclusive} or state marking \cite{LSM}. The task of state exclusion seeks to identify a state a given system is not prepared in, while in marking some states from the ensemble are distributed and the objective is to identify the permutation in which they appear. It can be readily seen that both of these tasks are less demanding than state discrimination, for identification of a state by LOCC is sufficient to perform both exclusion and marking locally. Consequently, a set of states that fails either of exclusion or marking is not only locally indistinguishable, it can be regarded as exhibiting a stronger form of nonlocality. %One may ask, for instance, whether local indistinguishability persists when the task is modified. Instead of distributing a single state and asking distant parties to identify it, a referee may distribute several states and ask the parties to determine the order in which they were given. This is the task of local state marking (LSM) \cite{LSM}. Clearly, any set of states that is locally distinguishable is also locally markable, but the converse need not hold. Indeed, it was shown by two of the authors and their collaborators that there exist locally indistinguishable sets of states that are nevertheless locally markable. More generally, once such related tasks are formulated, an important question is how they compare in difficulty. In this sense, LSM is strictly less demanding than local state discrimination (LSD), since every locally distinguishable set is necessarily locally markable. Consequently, a set of product states that fails even a less demanding task can be regarded as exhibiting a stronger form of nonlocality.

In this work, we extend this hierarchy by combining the ideas of state exclusion  \cite{bandyopadhyay2014conclusive}, and marking \cite{LSM}. We begin by formulating a local version of quantum state exclusion and focus on perfect exclusion, that is, exclusion with unit probability.  A set of states that admits perfect exclusion is known in the literature as antidistinguishable \cite{,Heinosaari_2018,Johnston2025}; accordingly, we refer to the corresponding local task as local state antidistinguishability (LSAD). This task was recently studied by Manna \textit{et al.} \cite{manna2026nonlocalityentanglementexclusionquantum}, who exhibited sets of product states that are not locally antidistinguishable. We show that the celebrated product states introduced by Bennett \textit{et al.}  exhibiting nonlocality without entanglement in local state discrimination \cite{BennettNLWE} lose their nonlocality when subjected to this local antidistinguishability task. Incidentally, these states were also shown to lose their nonlocality under the conclusive local state discrimination (CLSD) paradigm \cite{CheflesGlobal,Chefles}. Subsequently, Duan \textit{et al.} introduced a set of product states that remained nonlocal even under CLSD, thereby demonstrating that CLSD is strictly more demanding than LSD \cite{duanCLSD}. 

 This might raise the question whether there is any relation between the forms of nonlocality captured by CLSD and LSAD. However, we show that  they are in fact incomparable. Specifically, we provide one ensemble of product states that is nonlocal in the CLSD sense but local in the LSAD sense, and another exhibiting the reverse behavior. In addition, we prove that every set of mutually orthogonal multipartite pure states is locally antidistinguishable, which is remarkable in of itself. 
Next we introduce a new task, local state antimarking (LSAM), by unifying the framework of antidistinguishability and marking. In LSAM, the referee distributes  a non-repetitive sequence of multipartite states to spatially separated parties who use LOCC to identify one permutation in which the states \textit{do not} appear. To capture the weakest exclusion-based requirement and thereby test nonlocality in the most permissive setting, we adopt the criterion that the parties need only exclude the ordered tuple as a whole.  By this requirement a guess is deemed correct whenever the parties successfully exclude at least one state. Our formulation gives rise to the following chains of implications 
 % \sout{Concretely, if the distributed state is $|\psi_i\rangle_{A_1B_1}\otimes |\psi_j\rangle_{A_2B_2}$, then a successful antimarking protocol outputs an ordered pair $(a, b)$ such that $(a, b) \neq (i, j)$.}
 % \sout{This is much more lenient than a definition which would require the simultaneous exclusion of both individual indices (i.e., requiring $a \neq i$ AND $b \neq j$).} 
\begin{equation*}
\begin{array}{c}
\text{LSD} \implies \text{LSM} \implies \text{LSAM} \\
\text{and} \\
\text{LSD} \implies \text{LSAD} \implies \text{LSAM}
\end{array}
\end{equation*}
An especially important consequence is that if a set of product states does not admit LSAM, then it admits none of LSM, LSAD, or LSD. Such sets therefore exhibit a stronger form of nonlocality than sets for which any of these other tasks remain possible.

Our main results establish both the utility and the distinctiveness of this framework. First, we find that the product states introduced by Manna \textit{et al.} \cite{manna2026nonlocalityentanglementexclusionquantum} become local under LSAM: although they are nonlocal for LSAD, one can still use LOCC to identify a permutation in which the states do not occur. We then encounter a peculiar activation phenomena regarding nonlocality of certain product state ensembles. In particular, these states are not globally antidistinguishable, rendering the question of local antidistinguishability moot. Nevertheless, when one passes to permutations of these states, it becomes possible to rule out one candidate ordering by global measurements, whereas no LOCC protocol can do so. In other words, these ensembles are globally antimarkable but not locally antimarkable. They therefore display a stronger form of nonlocality than sets that admit LSM, LSAD, or LSD, and they do so entirely without entanglement.

\subsection{Antidistinguishability}

The task of quantum state exclusion seeks to identify which state from a known ensemble was not prepared \cite{bandyopadhyay2014conclusive}. A set of quantum states $\mathcal{S} = \{\rho_1, \rho_2, \dots, \rho_k\}$ is defined as antidistinguishable \cite{Heinosaari_2018} if there exists a Positive Operator-Valued Measure (POVM) $\mathcal{M} = \{\Pi_j\}_{j=1}^k$ satisfying two primary conditions:
\begin{itemize}
 
    \item \label{c1} Condition $1$ [Perfect Exclusion]:
    
    $\operatorname{Tr}(\rho_j \Pi_j) = 0$ for all $j \in \{1,\dots,k\}$. This ensures that if the measurement yields outcome $j$, the observer knows with absolute certainty that the system was not prepared in state $\rho_j$.
    
    \item \label{c2} Condition $2$ [Outcome Relevance]: 
    
    $\sum_{i=1}^k \operatorname{Tr}(\rho_i \Pi_j) > 0$ for all $j \in \{1,\dots,k\}$. This condition stipulates that every measurement outcome must occur with a non-zero probability for at least one state in the ensemble.
\end{itemize}

This formulation is often referred to as strong antidistinguishability. While an outcome may exclude multiple states simultaneously, the requirement of condition \ref{c1} ensures that there is at least one specific outcome dedicated to the exclusion of each candidate state in the set. In contrast, a more permissive definition is weak antidistinguishability \cite{stratton2024operational}, which requires only the satisfaction of the exclusion condition (Eq. 1) without requiring that all outcomes be relevant (Eq. 2). Under this weaker criterion, a measurement might only be capable of excluding a proper subset of the states in $\mathcal{S}$, leaving other states in the ensemble impossible to rule out in any experimental run.  Throughout this work, by antidistinguishability, we will consider only strong antidistinguishability. Furthermore, this framework can be generalized to $m$-state exclusion, a variant where each measurement outcome must simultaneously rule out $m$ different candidate states from the ensemble \cite{stratton2024operational}.

For any set of quantum states, we propose a sufficient condition to ensure antidistinguishability of a set of quantum states.

\begin{theorem}\label{theorem1}
    A set of quantum states $\mathcal{S}$ is anti-distinguishable if there exist $\{\mathcal{S}_i\}_{i=1}^k$ such that $\mathcal{S} = \mathcal{S}_1\cup\mathcal{S}_2\cup\dots\cup\mathcal{S}_k$ and $\mathcal{S}_i$'s are anti-distinguishable  $\forall\ i$ ,$\{\mathcal{S}_i\}$'s are not necessarily disjoint.
\end{theorem}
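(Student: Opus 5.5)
Since the $\mathcal{S}_i$ cover $\mathcal{S}$, the plan is to build one POVM for $\mathcal{S}$ as a classical mixture of the POVMs that antidistinguish the individual $\mathcal{S}_i$. Take $\mathcal{S}=\{\rho_1,\dots,\rho_n\}$. For each $i$, let $\{\Pi^{(i)}_j\}_{j:\rho_j\in\mathcal{S}_i}$ be a POVM that antidistinguishes $\mathcal{S}_i$, satisfying Conditions 1 and 2 on $\mathcal{S}_i$. Pick weights $p_i>0$ with $\sum_i p_i=1$, for instance $p_i=1/k$. Define, for each $j\in\{1,\dots,n\}$,
\begin{equation*}
\Pi_j=\sum_{i\,:\,\rho_j\in\mathcal{S}_i} p_i\,\Pi^{(i)}_j .
\end{equation*}
Operationally, one chooses $i$ with probability $p_i$, measures $\{\Pi^{(i)}_j\}$, and reports the outcome label $j$.

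The first check is that this is a POVM. Each $\Pi_j$ is a nonnegative combination of positive operators, so $\Pi_j\ge 0$. Summing over $j$ and swapping the order of summation gives
\begin{equation*}
\sum_j \Pi_j=\sum_i p_i\sum_{j:\rho_j\in\mathcal{S}_i}\Pi^{(i)}_j=\sum_i p_i\,\mathbb{1}=\mathbb{1}.
\end{equation*}
Overlaps between the $\mathcal{S}_i$ cause no trouble, because the labels simply add up.

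Condition 1 is immediate: $\operatorname{Tr}(\rho_j\Pi_j)=\sum_i p_i\operatorname{Tr}(\rho_j\Pi^{(i)}_j)=0$, since each term vanishes by Condition 1 for the set $\mathcal{S}_i$ that contains $\rho_j$.

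Condition 2 is the step that needs care. Fix $j$. Because the sets cover $\mathcal{S}$, there is some $i_0$ with $\rho_j\in\mathcal{S}_{i_0}$. Outcome relevance for $\mathcal{S}_{i_0}$ then gives some $\rho_l\in\mathcal{S}_{i_0}$ with $\operatorname{Tr}(\rho_l\Pi^{(i_0)}_j)>0$. All the remaining terms in $\Pi_j$ are positive semidefinite, so
\begin{equation*}
\sum_m\operatorname{Tr}(\rho_m\Pi_j)\ge p_{i_0}\operatorname{Tr}(\rho_l\Pi^{(i_0)}_j)>0 .
\end{equation*}
This is exactly why every $p_i$ must be strictly positive: if some weight were zero, the outcomes belonging only to that $\mathcal{S}_i$ could become irrelevant.

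The only subtlety I anticipate is bookkeeping. Outcomes of the different sub-POVMs must be identified through the state label $j$, not through the index $i$. Once that identification is made, the statement holds whether or not the $\mathcal{S}_i$ are disjoint.
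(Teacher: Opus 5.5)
Your proposal is correct and follows the paper's proof, which likewise mixes the sub-POVMs uniformly as $\{\frac{1}{k}E^i_j\}_{i,j}$. You add two refinements: you coarse-grain the outcomes by state label, so the POVM has exactly one element per state as the definition requires, and you check outcome relevance (Condition 2) explicitly, which the paper leaves implicit.
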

\begin{proof}
    Let $S_1, S_2,\dots S_k$ be anti-distinguishable by the measurements $M_1, M_2,\dots M_k$ respectively, where $M_i=\{E_j^i\}_{j=1}^{|\mathcal{S}_i|}$. Now define the measurement $M=\{\frac{1}{k}E_j^i\}_{i,j=1}^{k,|\mathcal{S}_i|}$, this anti-distinguishes the set $\mathcal{S}$
\end{proof}

To concretely illustrate these concepts, it is instructive to consider the exclusion of three pure states, a scenario for which definitive analytical conditions are known. 

As established by Caves \textit{et al.} \cite{CarltonCaves}, an ensemble of three pure states $\mathcal{S} = \{|\psi_1\rangle, |\psi_2\rangle, |\psi_3\rangle\}$ is strongly antidistinguishable if and only if their pairwise overlaps $x_{12} = |\langle\psi_1|\psi_2\rangle|^2$, $x_{13} = |\langle\psi_1|\psi_3\rangle|^2$ and $x_{23} = |\langle\psi_2|\psi_3\rangle|^2$, and satisfy the conditions
\begin{subequations}
\begin{align}
x_{12} + x_{13} + x_{23} &< 1, \label{eq:caves1} \\ % <--- Fixed here: changed \ to \\
[1 - (x_{12} + x_{13} + x_{23})]^2 &\geq 4x_{12} x_{13} x_{23}. \label{eq:caves2}
\end{align}
\end{subequations}

Applying this criterion to the straightforward ensemble $\mathcal{S}_w \equiv \{|0\rangle, |1\rangle, |+\rangle\}$, we find overlaps $x_{12} = 0$ and $x_{13} = x_{23} = 1/2$. This configuration clearly violates Eq.~\eqref{eq:caves1}, precluding strong antidistinguishability. Nevertheless, $\mathcal{S}_w$ remains weakly antidistinguishable. By performing a measurement in the computational basis, $\mathcal{M} = \{|0\rangle\langle 0|, |1\rangle\langle 1|\}$, we successfully eliminate one candidate state in every experimental run: observing the outcome $|0\rangle\langle 0|$ definitively rules out the preparation of $|1\rangle$, and vice versa. However, because the state $|+\rangle$ yields a non-zero overlap with both outcomes, it can never be excluded. Note that, there is a third measurement operator, the zero operator, which is a redundant one. This highlights the operational deficit of weak antidistinguishability, where the full ensemble cannot be universally ruled out.

In stark contrast, consider the symmetric trine states, 
\begin{equation}
|T_k\rangle = \cos\left[\frac{2\pi}{3}(k-1)\right]|0\rangle + \sin\left[\frac{2\pi}{3}(k-1)\right]|1\rangle,
\end{equation}
for $k \in \{1,2,3\}$. This geometrically uniform ensemble $\mathcal{S}_{Tri} \equiv \{|T_1\rangle, |T_2\rangle, |T_3\rangle\}$ is strongly antidistinguishable. The optimal exclusion strategy uses the measurement $\mathcal{M}_{Tri} \equiv \left\{\frac{2}{3}|T_k^\perp\rangle\langle T_k^\perp|\right\}_{k=1}^3$, where $|T_k^\perp\rangle$ denotes the state orthogonal to $|T_k\rangle$. In this configuration, every measurement outcome occurs with non-zero probability across the ensemble, and the $k$-th outcome conclusively eliminates $|T_k\rangle$. Thus, no outcome is redundant, and every state is perfectly excludable.

\subsection{Local Antidistinguishability}

Having established the foundations for single-system exclusion, we now turn to the bipartite regime, where the operational constraints of Local Operations and Classical Communication (LOCC) become central. A paradigmatic case for this study is the set of four maximally entangled Bell states in $\mathbb{C}^2 \otimes \mathbb{C}^2$, defined as:
\begin{equation*}
\mathcal{S}^{\text{Bell}} \equiv 
\left\{
\begin{aligned}
\ket{\mathcal{B}^1} &:= \ket{\Phi^+}_{AB} := (\ket{00} + \ket{11})/\sqrt{2} \\
\ket{\mathcal{B}^2} &:= \ket{\Phi^-}_{AB} :=(\ket{00} - \ket{11})/\sqrt{2} \\
\ket{\mathcal{B}^3} &:= \ket{\Psi^+}_{AB} :=(\ket{01} + \ket{10})/\sqrt{2} \\
\ket{\mathcal{B}^4} &:= \ket{\Psi^-}_{AB} :=(\ket{01} - \ket{10})/\sqrt{2}
\end{aligned}
\right\}.
\end{equation*}
While it is well-known that this ensemble cannot be perfectly or even conclusively distinguishable via LOCC, we demonstrate that it is perfectly antidistinguishable under the same local constraints.

\begin{proposition}\label{prop:Bell_LAD}The set of four Bell states $\mathcal{S}^{\text{Bell}}$ is locally antidistinguishable.\end{proposition}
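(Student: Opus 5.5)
Our plan is to give an explicit LOCC protocol in which both parties measure locally and communicate their outcomes classically. We then verify the two conditions of (strong) antidistinguishability: every outcome excludes some Bell state with certainty, and each of the four Bell states is the one excluded on some outcome that occurs with nonzero probability.

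First we would try the simplest protocol, in which Alice and Bob both measure in the computational basis $\{\ket{0},\ket{1}\}$ and exchange results. If the outcomes agree ($00$ or $11$), the state cannot be $\ket{\Psi^\pm}$, since these have no support on $\ket{00},\ket{11}$. If they differ ($01$ or $10$), the state cannot be $\ket{\Phi^\pm}$. This measurement only rules out pairs, and we must assign a single label to each outcome. We therefore refine it by attaching the labels as follows: outcome $00$ excludes $\mathcal{B}^3$, outcome $11$ excludes $\mathcal{B}^4$, outcome $01$ excludes $\mathcal{B}^1$, and outcome $10$ excludes $\mathcal{B}^2$. Each resulting local product operator $\ketbra{ab}{ab}$ has zero overlap with its assigned Bell state. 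Each outcome also occurs with probability $1/2$ on the two Bell states of the other parity, so Condition 2 holds. Thus the four-outcome product POVM $\{\ketbra{ab}{ab}\}$ with these labels satisfies both conditions of the definition, and it is clearly implementable by LOCC (no communication is even needed beyond reporting the outcomes to a referee).

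For a more symmetric presentation, one could alternatively combine two such protocols via Theorem~\ref{theorem1}. The first is the computational-basis measurement, which antidistinguishes pairs across parity. The second is a measurement in which both parties use the $\{\ket{+},\ket{-}\}$ basis: equal outcomes exclude $\ket{\Phi^-},\ket{\Psi^-}$, and unequal outcomes exclude $\ket{\Phi^+},\ket{\Psi^+}$. Mixing the two with weight $1/2$ each is itself LOCC, since a shared random choice can be made by one party and communicated. This is, however, unnecessary.

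The only step requiring care is Condition 2 together with the requirement that each outcome be assigned a \emph{single} excluded state while every state is covered. This is a bookkeeping matter, and the explicit labelling above settles it. We expect no real obstacle, because orthogonality of the Bell-state supports under local computational-basis projections does all the work.
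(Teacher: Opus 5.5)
Your protocol is the paper's: both parties measure in the computational basis, share their outcomes, and use the same labelling as Table~\ref{Table1} (outcome $00$ excludes $\ket{\Psi^+}$, $01$ excludes $\ket{\Phi^+}$, $10$ excludes $\ket{\Phi^-}$, $11$ excludes $\ket{\Psi^-}$), and your checks of both conditions are correct. Your optional variant that mixes in the $\{\ket{+},\ket{-}\}$-basis protocol via Theorem~\ref{theorem1} is also valid, but as you say it is unnecessary.
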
\begin{proof}Consider a protocol where Alice and Bob each perform a local projective measurement in the computational basis $\mathcal{M} = \{|0\rangle\langle 0|, |1\rangle\langle 1|\}$. By communicating their respective outcomes, $(a, b) \in \{0, 1\}^2$, the parties can conclusively rule out specific candidate states.For instance, if both parties obtain the outcome $0$, they know the global state was projected onto $|00\rangle$. Since $|00\rangle$ has zero overlap with the subspace spanned by $\{|\Psi^+\rangle, |\Psi^-\rangle\}$, these states are perfectly excluded. By assigning a unique state to each of the four possible joint outcomes as detailed in Table~\ref{Table1}, the parties satisfy the criteria for strong antidistinguishability: every outcome conclusively rules out at least one state, and every state in the ensemble is excludable by at least one outcome.
\end{proof}
\begin{table}[t]
\centering
\begin{tabular}{||c|c|c||} 
\hline
Alice's outcome & Bob's outcome & Excluded state \\
\hline
$0$ & $0$ & $|\Psi^+\rangle$ \\ \hline
$0$ & $1$ & $|\Phi^+\rangle$ \\ \hline
$1$ & $0$ & $|\Phi^-\rangle$ \\ \hline
$1$ & $1$ & $|\Psi^-\rangle$ \\
\hline
\end{tabular}
\caption{Local antidistinguishability protocol for the set of  four Bell states $\mathcal{S}^{\text{Bell}}$. For each joint outcome $(a,b)$ obtained via local computational basis measurements, the corresponding state in the third column is conclusively excluded.}
\label{Table1} % <--- Label must be INSIDE the table environment and AFTER the caption
\end{table} These results highlight the operational leniency inherent in the paradigm of state exclusion. While the four Bell states are fundamentally nonlocal within the contexts of local as well as conclusive  discrimination \cite{CLSM}, they prove to be 'perfectly local' in the setting of local antidistinguishability. This transition from nonlocal to local behavior is not limited to entangled systems; as we show below, the mutually orthogonal product states introduced by Bennett et al.—a cornerstone of 'nonlocality without entanglement'—are also locally antidistinguishable.

  The nine orthogonal product states of Bennett \textit{et al.} \cite{BennetQNWE}:
\renewcommand{\arraystretch}{1.3}
\begin{equation*}
\mathcal{S}^{\text{B}} \equiv \left\{
\begin{array}{ll}
\ket{\psi}_1 = \ket{1}_A\ket{1}_B,~~~~~\ket{\psi}_2 = \ket{0}_A\ket{0{+}1}_B,\\
\ket{\psi}_3 = \ket{0}_A\ket{0{-}1}_B,~\ket{\psi}_4 = \ket{2}_A\ket{1{+}2}_B,\\
\ket{\psi}_5 = \ket{2}_A\ket{1{-}2}_B,~\ket{\psi}_6 = \ket{1{+}2}_A\ket{0}_B,\\
\ket{\psi}_7 = \ket{1{-}2}_A\ket{0}_B,~\ket{\psi}_8 = \ket{0{+}1}_A\ket{2}_B,\\
\hspace{1.5cm}\ket{\psi}_9 = \ket{0{-}1}_A\ket{2}_B, \\
\end{array}
\right\},
\end{equation*}
\renewcommand{\arraystretch}{1.0}
where we define $\ket{p \pm q} := \ket{p} \pm \ket{q}$, omitting normalization factors for brevity.

The above product states are locally antidistinguishable. 
\begin{proof}
    Alice performs the measurement $\mathcal{M}_A\equiv\{\ketbra{0+1}{0+1},\ketbra{0-1}{0-1},\ketbra{2}{2}\}$ and Bob performs the measurement $\mathcal{M}_B\equiv\{\ketbra{0}{0},\ketbra{1+2}{1+2},\ketbra{1-2}{1-2}\}$. And communicating their outcomes eliminate the states as given in table \ref{Table2}
\end{proof}
\begin{table}[H]
\centering
\begin{tabular}{||c|c|c||}
\hline
Alice's outcome& Bob's outcome & Eliminated states \\
\hline
$\ketbra{0+1}{0+1}$ & 
\begin{tabular}{c}
$\ketbra{0}{0}$\\
$\ketbra{1+2}{1+2}$\\
$\ketbra{1-2}{1-2}$
\end{tabular}
& \begin{tabular}{c}
$\ket{\psi}_8$\\
$\ket{\psi}_4$\\
$\ket{\psi}_5$
\end{tabular} \\
\hline

$\ketbra{0-1}{0-1}$ & 
\begin{tabular}{c}
$\ketbra{0}{0}$\\
$\ketbra{1+2}{1+2}$\\
$\ketbra{1-2}{1-2}$
\end{tabular}
& \begin{tabular}{c}
$\ket{\psi}_6$\\
$\ket{\psi}_7$\\
$\ket{\psi}_9$
\end{tabular} \\
\hline

$\ketbra{2}{2}$ & 
\begin{tabular}{c}
$\ketbra{0}{0}$\\
$\ketbra{1+2}{1+2}$\\
$\ketbra{1-2}{1-2}$
\end{tabular}
& \begin{tabular}{c}
$\ket{\psi}_1$\\
$\ket{\psi}_2$\\
$\ket{\psi}_3$
\end{tabular} \\
\hline
\end{tabular}
\caption{Caption}\label{Table2}
\end{table}
   
These examples suggest that the relaxed constraints of state exclusion might help in the local antidistinguishability of any mutually orthogonal pure-state ensemble.  In Lemmas \ref{lemma1} and \ref{lemma2}, we bring to notice that mutual orthogonality—which ensures global distinguishability and, consequently antidistinguishability—is also a sufficient condition for local antidistinguishability. This connection emerges as a direct by-product of the seminal result on local state discrimination by Walgate \textit{et al.} \cite{walgatehardy}, which we restate here for context :

\begin{lemma}\label{lemma1}\cite{walgatehardy}
Alice and Bob can always find a local basis in which any two bipartite orthogonal states, $|\psi\rangle_{AB}$ and $|\phi\rangle_{AB}$, can be represented as:
\begin{subequations}
\begin{align}
|\psi\rangle_{AB} &= \sum_{i=1}^{l} |i\rangle_{A}|\eta_i\rangle_B, \\
|\phi\rangle_{AB} &= \sum_{i=1}^{l} |i\rangle_{A}|\eta_i^\perp\rangle_B,
\end{align}
\end{subequations}
where $\{|i\rangle_{A'}\}_{i=1}^l$ forms an orthogonal basis set for Alice, the vectors $\{|\eta_i\rangle_B\}$ are not normalized, and each $|\eta_i^\perp\rangle_B$ is strictly orthogonal to its corresponding $|\eta_i\rangle_B$. This ensures that any two  mutually orthogonal bipartite and multipartite pure states are locally distinguishable.
\end{lemma}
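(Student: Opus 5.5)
The plan is to reprove the Walgate et al. decomposition of Lemma~\ref{lemma1} directly, and then read off the LOCC protocol it suggests. I first fix notation. Write $|\psi\rangle_{AB}$ and $|\phi\rangle_{AB}$ as linear maps from Alice's dual space to Bob's space. In any basis $\{|i\rangle_A\}$ we have $|\psi\rangle=\sum_i |i\rangle_A|\eta_i\rangle_B$ and $|\phi\rangle=\sum_i|i\rangle_A|\nu_i\rangle_B$, with $|\eta_i\rangle=(\langle i|\otimes I)|\psi\rangle$ and similarly for $|\nu_i\rangle$. The goal is an orthonormal basis $\{|i\rangle_A\}$ for which $\langle \eta_i|\nu_i\rangle=0$ for every $i$. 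Setting $|\eta_i^\perp\rangle:=|\nu_i\rangle$ then gives exactly the stated form.

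To find this basis, I rephrase the condition. Let $\rho_B$-type operators act on Alice's space: define the (generally non-Hermitian) operator
\begin{equation*}
X:=\mathrm{Tr}_B\bigl(|\phi\rangle\langle\psi|\bigr),
\end{equation*}
so that $\langle i|X|i\rangle=\langle\eta_i|\nu_i\rangle$ up to conjugation. Orthogonality of the two states gives $\mathrm{Tr}X=\langle\psi|\phi\rangle=0$. The problem therefore reduces to a pure matrix fact: every traceless $d\times d$ matrix is unitarily similar to a matrix with zero diagonal. I would prove this by induction on $d$.
\begin{itemize}
\item[(i)] Since $\mathrm{Tr}X=0$, the numerical range of $X$ (a convex set, by Toeplitz--Hausdorff) contains $\mathrm{Tr}X/d=0$. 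So there is a unit vector $|v\rangle$ with $\langle v|X|v\rangle=0$. Convexity is not even strictly needed here: $0$ is an average of diagonal entries, so the convex hull argument suffices.
\item[(ii)] Take $|1\rangle=|v\rangle$ and complete it to a basis. The compression of $X$ to $|v\rangle^\perp$ is again traceless, so the induction hypothesis supplies the remaining basis vectors $|2\rangle,\dots,|d\rangle$.
\end{itemize}
This is the main technical step, and I expect it to be the only nontrivial one.

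Finally, the closing sentence of Lemma~\ref{lemma1} follows from the following protocol. Alice measures in $\{|i\rangle_A\}$ and announces the outcome $i$. Bob then holds either $|\eta_i\rangle$ or $|\eta_i^\perp\rangle$; these are orthogonal, so he measures in a basis separating them. If one of the vectors is zero, the outcome is conclusive already.

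For the multipartite case, I would group parties $2,\dots,n$ as Bob and apply the same step recursively. Party $1$ measures, and the conditional states on the remaining parties are still two orthogonal pure states. One then iterates until the last party, whose two candidate states are orthogonal and therefore perfectly distinguishable. Since each step is a local measurement followed by broadcast, the whole procedure is LOCC.
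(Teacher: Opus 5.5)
Your proof is correct. It differs from the paper mainly in that it actually proves the decomposition, which the paper does not.

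\textbf{What the paper does.} The paper imports the decomposition from Walgate \emph{et al.} without proof. It then only describes the resulting protocol: Alice measures in $\{|i\rangle_A\}$ and announces $i$, and Bob separates the orthogonal conditional states $|\eta_i\rangle_B$ and $|\eta_i^\perp\rangle_B$. That protocol coincides with your third paragraph.

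\textbf{What you do differently.} You make the lemma self-contained by encoding the condition $\langle\eta_i|\eta_i^\perp\rangle=0$ as the vanishing diagonal of a single operator $X=\mathrm{Tr}_B(|\phi\rangle\langle\psi|)$ on Alice's space.
\begin{itemize}
\item In any orthonormal basis, the diagonal of $X$ is exactly $\langle\eta_i|\nu_i\rangle$; the ``up to conjugation'' caveat is unnecessary.
\item The trace of $X$ is $\langle\psi|\phi\rangle=0$.
\item The statement therefore reduces to the classical fact that a traceless matrix is unitarily similar to one with zero diagonal. Your induction on compressions proves this correctly.
\end{itemize}

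\textbf{What each route buys.} Citing keeps the paper short. Your route explains transparently why orthogonality is exactly the hypothesis needed. It also gives a recipe for Alice's basis: find a zero of the numerical range, then recurse on the orthogonal complement. Your $n$-party recursion justifies the ``and multipartite'' clause, which the paper merely asserts. The recursion groups parties $2,\dots,n$, notes that the conditional states remain orthogonal (or one vanishes), and lets each later party's basis depend on the announced outcomes.

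\textbf{One correction.} Your aside that convexity ``is not even strictly needed'' is wrong. The diagonal entries of $X$ lie in the numerical range $W(X)$. Their average $\mathrm{Tr}X/d=0$ lies in $W(X)$ only because $W(X)$ is convex. The statement that convex combinations of points of $W(X)$ stay in $W(X)$ is precisely the content of Toeplitz--Hausdorff, which reduces to the elliptical range theorem for $2\times 2$ compressions. The bare observation that $0$ is in the convex hull of the diagonal entries gives nothing without it. Your main line explicitly invokes Toeplitz--Hausdorff, so the proof stands; just drop the aside.
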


This basis enables a direct protocol for local state discrimination. Alice measures her subsystem in the orthogonal basis $\{|i\rangle_{A}\}$ and communicates the result, $i$, to Bob. Consequently, Bob is left with the task of locally distinguishing the states $|\eta_i\rangle_B$ and $|\eta_i^\perp\rangle_B$. Since these conditional states are guaranteed to be orthogonal, they are perfectly distinguishable. This measurement, supported by  classical communication, allows the parties to perfectly identify the initial global state

\begin{lemma}\label{lemma2}
     Any set of mutually orthogonal pure states are locally antidistinguishable.
\end{lemma}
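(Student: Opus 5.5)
The plan is to reduce the statement to Theorem~\ref{theorem1}, applied in its local form, together with Lemma~\ref{lemma1}. Let $\mathcal{S}=\{|\psi_1\rangle,\dots,|\psi_n\rangle\}$ be mutually orthogonal multipartite pure states with $n\ge 2$; the case $n=1$ is degenerate and should be excluded or handled by convention.

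\textbf{Step 1: a pair is locally antidistinguishable.} For any pair $\{|\psi_a\rangle,|\psi_b\rangle\}$, Lemma~\ref{lemma1} gives an LOCC protocol that perfectly discriminates the two states. Whenever the protocol identifies $|\psi_a\rangle$, the parties declare ``not $|\psi_b\rangle$'', and vice versa. Both outcomes occur with nonzero probability, since each state is identified with certainty. So Conditions 1 and 2 hold, and every pair is locally antidistinguishable.

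\textbf{Step 2: cover $\mathcal{S}$ by pairs.} Write $\mathcal{S}=\bigcup_{j} \{|\psi_j\rangle,|\psi_{j+1}\rangle\}$, indices taken mod $n$, or use any covering by pairs. Overlaps between the pairs are allowed by Theorem~\ref{theorem1}.

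\textbf{Step 3: combine the pair protocols.} The proof of Theorem~\ref{theorem1} mixes the measurements $M_i$ with weights $1/k$. Operationally, the parties share classical randomness, or one party samples $i$ uniformly and broadcasts it, and they then run the LOCC protocol for pair $i$. A classical mixture of LOCC protocols is still LOCC, so the combined measurement $\{\frac1k E^i_j\}$ is implementable locally. Perfect exclusion is inherited from each branch. Outcome relevance also holds: each outcome $E^i_j$ has positive probability on one state of pair $i$, and the factor $1/k$ keeps it positive.

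\textbf{Main obstacle.} The main point needing care is Step 3. One must check that Theorem~\ref{theorem1}, stated for global measurements, transfers to the LOCC setting. This requires that convex combinations of LOCC instruments are LOCC, which is true with shared or communicated randomness. It also requires that the strong (outcome-relevance) condition survives the relabelling.

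In the multipartite case, Lemma~\ref{lemma1} is invoked via its final sentence: any two orthogonal multipartite pure states are locally distinguishable, for instance by grouping parties and applying the Walgate construction recursively. This is the one ingredient taken from prior work rather than proved here.

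A simpler alternative also avoids mixing entirely. Run the Lemma~\ref{lemma1} protocol for $\{|\psi_1\rangle,|\psi_2\rangle\}$ on the unknown state. If it outputs ``$\psi_1$'', exclude $\psi_2$; otherwise exclude $\psi_1$. This exclusion is valid whenever the true state is $\psi_1$ or $\psi_2$, but a state $\psi_k$ with $k\ge3$ could trigger either outcome. The declaration is therefore not guaranteed correct for those states, which is why the pairwise mixture is needed. Within each pair branch, the randomly chosen pair must be handled so that exclusions stay correct for every $\psi_k$. I expect this to be the real difficulty, and would address it by choosing, in each branch, an exclusion that is certain for every input, not just for the two states in that pair.
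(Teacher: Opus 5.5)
Your Steps 1--3 are correct and are essentially the paper's proof. The paper also covers $\mathcal{S}$ by pairs: $\{\psi_{2i-1},\psi_{2i}\}$, plus $\{\psi_1,\psi_N\}$ when $N$ is odd, which is equivalent to your cyclic cover. It excludes within each pair via the Walgate discrimination protocol and glues the branches with shared randomness as in Theorem~\ref{theorem1}. Your explicit checks that the mixture is LOCC, that outcome relevance survives, and that $n=1$ must be excluded are details the paper leaves implicit.

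Your closing paragraph, however, misidentifies why the mixture is needed, and the ``real difficulty'' it announces does not exist. In the single-pair protocol, the declaration ``not $\psi_2$'' after output ``$\psi_1$'' can be false only if the input is $\psi_2$. By perfect discrimination, $\psi_2$ produces that output with probability zero. If the input is $\psi_k$ with $k\ge 3$, then ``not $\psi_1$'' and ``not $\psi_2$'' are both true, whichever outcome occurs. So that protocol always satisfies Condition~1; it is exactly the Walgate et al.\ observation, quoted in the paper, that one state can always be eliminated.

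What the single-pair protocol violates is Condition~2. The POVM elements assigned to $\psi_3,\dots,\psi_n$ are zero, so these states are never excluded, which gives only weak antidistinguishability. The pairwise mixture exists solely to give every state an exclusion outcome of positive probability. For the same reason, within each branch of the mixture the exclusion is already certain for every input in $\mathcal{S}$. That is precisely your correct assertion in Step 3 that perfect exclusion is inherited, and it is why Theorem~\ref{theorem1} holds at all.

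As written, your last paragraph contradicts Step 3 and leaves an apparently open obstacle in an argument that is in fact complete. Replace it with the statement that mixing is needed for outcome relevance, not for the correctness of the exclusions.
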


\begin{proof}
%  Consider an ensemble $\mathcal{S} \equiv \{\ket{\psi_i}\}_{i=1}^N$ of $N$ orthogonal pure multipartite states. From the previously discussed lemma, any pair of states within $\mathcal{S}$ is locally distinguishable, provided that the parties know which specific pair has been prepared. We leverage this conditional distinguishability to achieve local antidistinguishability for the entire set. Suppose, the parties are provided a state $\ket{\psi_{i^*}}$ of unknown identity.  For $N$ being even define a collection of protocols $\{\mathcal{P}_i\}$ to distinguish the pair $\{\ket{\psi_{2i-1}}, \ket{\psi_{2i \pmod N}}\}$ where $i = 1\dots N/2$, for $N$ being odd define $\{\mathcal{P}_i\}$ to distinguish the pair $\{\ket{\psi_{2i-1}}, \ket{\psi_{2i \pmod N}}\}$ where $i = 1\dots \lfloor N/2\rfloor$ and $\mathcal{P}_{\lfloor N/2\rfloor+1}$ as the protocol to distinguish $\{\psi_1,\psi_N\}$ If protocol $\mathcal{P}_i$ identifies the state as $\ket{\psi_{2i-1}}$, the parties exclude $\ket{\psi_{2i \pmod N}}$; otherwise, they exclude $\ket{\psi_{2i-1}}$. By utilizing shared randomness to select a specific protocol $\mathcal{P}_{i_0}$, the parties can perform antidistinguishability of the set of mutually orthogonal pure states.

% \hspace{1cm}

In \cite{walgatehardy} the authors have concluded that given one copy of a state from a set of $N$ orthogonal states, by LOCC the local parties can eliminate at least one state. This is because given one copy, they can run a protocol of discriminating between two arbitrary states of the ensemble, say, $\ket{\psi}$ and $\ket{\phi}$. If their outcome is $\ket{\psi}$, they can conclude that their shared state is definitely not $\ket{\phi}$, and vice versa. Note that, they can choose the pair of states from which they will perform the exclusion arbitrarily. We adopt this idea to construct local exclusion protocols for any set of orthogonal pure states in the strong sense.  

For $N$ being even define a collection of protocols $\{\mathcal{P}_i\}$ to exclude one state from the pair $\{\ket{\psi_{2i-1}}, \ket{\psi_{2i \pmod N}}\}$ where $i = 1\dots N/2$.  For $N$ being odd define $\{\mathcal{P}_i\}$ to to exclude one state from the pair $\{\ket{\psi_{2i-1}}, \ket{\psi_{2i \pmod N}}\}$ where $i = 1\dots \lfloor N/2\rfloor$ and $\mathcal{P}_{\lfloor N/2\rfloor+1}$ as the protocol to to exclude one state from the pair $\{\psi_1,\psi_N\}$ . Using the idea of Theorem \ref{theorem1} by utilizing shared randomness to select a specific protocol $\mathcal{P}_{i_0}$, the parties can perform antidistinguishability of the set of mutually orthogonal pure states.\end{proof}

% {\color{blue}
% Consider an ensemble $\mathcal{S} \equiv \{\ket{\psi_i}\}_{i=1}^N$ of $N$ orthogonal pure multipartite states. From lemma \ref{lemma1} we can ensure that for $N$ being even the pairs $\mathcal{S}_i\equiv\{\ket{\psi_{2i-1}}, \ket{\psi_{2i \pmod N}}\}$ where $i = 1\dots N/2$ are distinguishable/antidistinguishable(both are anonymous for a pair of states) and for $N$ being odd the pair $\mathcal{S}_i\equiv\{\ket{\psi_{2i-1}}, \ket{\psi_{2i \pmod N}}\}$ where $i = 1\dots \lfloor N/2\rfloor$ and $\mathcal{S}_{\lfloor N/2\rfloor+1}\equiv\{\psi_1,\psi_N\}$ are distinguishable/antidistinguishable . Since, $\bigcup_i\mathcal{S}_i=\mathcal{S}$ from theorem \ref{theorem1} we can conclude that $\mathcal{S}$ is anti-distinguishable.}

 However, this does not contradict the findings of Ref. \cite{Halder19}, because their state-discrimination objective restricts the parties to orthogonality-preserving local measurements—a constraint that is not required in our framework.
However, with non-orthogonal product states, the situation can be different. Consider the following set of states:

\begin{align}
\mathcal{S}_{\text{D}} \equiv \left\{
\begin{aligned}
  &\ket{D_1} := \ket{0}_A \ket{0}_B, \quad 
   \ket{D_2} := \ket{1}_A \ket{1}_B, \\
  &\ket{D_3} := \ket{+}_A \ket{+}_B,\, \ket{D_4} := \ket{i_+}_A \ket{i_-}_B
\end{aligned}
\right\}, \label{Duanproductstates}
\end{align}

where $\ket{\pm}:=1/\sqrt{2}(\ket{0} \pm\ket{1})$, $\ket{i_{\pm}}:=1/\sqrt{2}(\ket{0} \pm\iota\ket{1})$, with $\iota:=\sqrt{-1}$

Duan \textit{et al.} introduced these product states \cite{duanCLSD} to demonstrate nonlocality within the conclusive state discrimination paradigm \cite{Chefles}—a task where Bennett’s product states, by contrast, are locally conclusively distinguishable. The linear independence of the states in $\mathcal{S}^D$ ensures their global conclusive distinguishability \cite{CheflesGlobal}. Crucially, we show that these states exhibit nonlocality within the LSAD paradigm as well.

\begin{proposition}\label{duan}
    The set $\mathcal{S}_{\text{D}}$ is globally antidistinguishable. However, it is not locally antidistinguishable.
\end{proposition}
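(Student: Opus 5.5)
For global antidistinguishability I will use Theorem~\ref{theorem1} together with the Caves criterion \eqref{eq:caves1}--\eqref{eq:caves2}, rather than constructing a POVM on $\mathbb{C}^2\otimes\mathbb{C}^2$ directly. Pairs will not do: apart from $\{\ket{D_1},\ket{D_2}\}$, every pair in $\mathcal{S}_{\text{D}}$ is non-orthogonal, and two non-orthogonal states cannot be antidistinguished. I will therefore cover $\mathcal{S}_{\text{D}}$ by the two triples $\{\ket{D_1},\ket{D_3},\ket{D_4}\}$ and $\{\ket{D_2},\ket{D_3},\ket{D_4}\}$. In both triples every local overlap is $1/2$, so every pairwise overlap is $x=1/4$. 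Then $x_{12}+x_{13}+x_{23}=3/4<1$, and $(1-3/4)^2=1/16=4\cdot(1/4)^3$, so \eqref{eq:caves2} holds with equality. Both triples are therefore strongly antidistinguishable, and Theorem~\ref{theorem1} gives global antidistinguishability of $\mathcal{S}_{\text{D}}$.

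For the local half I will argue by contradiction. Suppose some LOCC protocol strongly antidistinguishes $\mathcal{S}_{\text{D}}$. Its final POVM can be refined into product elements $E_j=A_j\otimes B_j$, and each nonzero $E_j$ must give zero probability to some $\ket{D_{k(j)}}$. For product states this condition localises: $\operatorname{Tr}(E_j\ketbra{D_k}{D_k})=0$ if and only if $A_j\ket{a_k}=0$ or $B_j\ket{b_k}=0$, where $\ket{D_k}=\ket{a_k}\ket{b_k}$. Here the local vectors are $\{\ket{a_k}\}=\{\ket0,\ket1,\ket+,\ket{i_+}\}$ and $\{\ket{b_k}\}=\{\ket0,\ket1,\ket+,\ket{i_-}\}$, and within each party's list no two vectors are parallel. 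Since these are qubits, an annihilating factor must be rank one and proportional to the projector onto the orthogonal complement of a single one of these vectors. Consequently any element both of whose factors are full rank, or whose rank-one factors point outside the eight special directions, excludes nothing and is forbidden.

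The core step is then an analysis of the LOCC tree. I will show that some leaf with nonzero weight always carries an element that excludes nothing. Along any branch, let Alice be the first party to perform a nontrivial measurement, with Kraus elements $M_r$. Consider the operators $M_r^\dagger M_r$, which sum to the identity. If all of them were proportional to rank-one projectors onto the special directions $\ket1,\ket0,\ket-,\ket{i_-}$, the completeness relation in $\mathbb{C}^2$ would force these projectors to pair up into orthogonal bases. Only the bases $\{0,1\}$, $\{\pm\}$ and $\{i_\pm\}$ are available. In each of them the partner direction, for example outcome $\ket{+}$ after measuring in $\{\pm\}$, leaves Alice's vector not orthogonal to any of the remaining $\ket{a_k}$. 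On that sub-branch Alice can exclude nothing further, so exclusion must come from Bob, who faces the full list of $\ket{b_k}$'s with the same structure. I will iterate this argument, tracking at each node which states have already been excluded on the current branch. The aim is to show that every branch ends with a positive-weight leaf on which neither party's rank-one factor is orthogonal to a surviving local vector. This conclusion must rely on the strong requirement that $\ket{D_3}$ and $\ket{D_4}$ each be excluded on some outcome. The protocol $\{\ketbra11\otimes I,\ \ketbra00\otimes I\}$ shows that weak exclusion by LOCC is possible.

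I expect the main obstacle to be making this tree argument airtight. Two issues complicate it: non-projective intermediate measurements, which may be full rank and therefore only deform the local vectors, and shared randomness in the spirit of Theorem~\ref{theorem1}, which mixes protocols. I would first try to handle the invertible intermediate steps by tracking how each Kraus operator deforms the four local vectors, reducing to the case where every meaningful local step is a rank-one outcome. I would then do a finite case check over which of $\ket{D_3},\ket{D_4}$ is excluded on a given branch, to locate a complementary outcome that excludes nothing. The hardest part will be ruling out mixtures of branches: I must show that the leaf excluding $\ket{D_4}$ and the leaf excluding $\ket{D_3}$ cannot both be completed without leaving such a dead leaf.
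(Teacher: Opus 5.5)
Your global half is correct and is exactly the paper's argument. Both use the same two triples, all pairwise overlaps equal $1/4$, condition \eqref{eq:caves2} holds with equality, and Theorem~\ref{theorem1} finishes it.

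The local half cannot be completed, because the dead leaf you plan to exhibit does not exist: $\mathcal{S}_{\text{D}}$ \emph{is} locally antidistinguishable. Let Alice perform the POVM $\{\tfrac12\ketbra{+}{+},\tfrac12\ketbra{-}{-},\tfrac12\ketbra{i_+}{i_+},\tfrac12\ketbra{i_-}{i_-}\}$. Let Bob measure in $\{\ket{0},\ket{1}\}$. Group the joint outcomes as
\begin{align*}
\Pi_1&=\tfrac12\bigl(\ketbra{+}{+}+\ketbra{i_+}{i_+}\bigr)\otimes\ketbra{1}{1}, & \Pi_2&=\tfrac12\bigl(\ketbra{+}{+}+\ketbra{i_+}{i_+}\bigr)\otimes\ketbra{0}{0},\\
\Pi_3&=\tfrac12\ketbra{-}{-}\otimes\mathbb{I}, & \Pi_4&=\tfrac12\ketbra{i_-}{i_-}\otimes\mathbb{I}.
\end{align*}
These operators sum to $\mathbb{I}\otimes\mathbb{I}$. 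We have $\operatorname{Tr}(\Pi_j\ketbra{D_j}{D_j})=0$ for every $j$: Bob's factor annihilates his part of $\ket{D_1}$ and $\ket{D_2}$, and Alice's factor annihilates her part of $\ket{D_3}$ and $\ket{D_4}$. Every $\Pi_j$ also has probability at least $1/4$ on $\ket{D_1}$ or $\ket{D_2}$, so the exclusion is strong.

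Your recursion breaks at the step ``exclusion must come from Bob, who faces the full list''. On Alice's uninformative outcome $\ket{+}$ (or $\ket{i_+}$), Bob's list still contains the orthogonal pair $\ket{0},\ket{1}$. His computational-basis measurement therefore excludes $\ket{D_1}$ or $\ket{D_2}$ on every outcome. This is just your own weak protocol with the roles moved to Bob. The case you flagged as hardest to rule out, a mixture of a $\ket{D_3}$-excluding branch and a $\ket{D_4}$-excluding branch, is exactly what succeeds; it is the mixing idea of Theorem~\ref{theorem1} applied to two weak LOCC protocols.

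The same example shows that the paper's own route does not establish this half either. Non-antidistinguishability of each party's qubit set $\{\ket{0},\ket{1},\ket{+},\ket{i_\pm}\}$ only rules out one party carrying all the exclusion alone. Here Alice excludes $\ket{D_3},\ket{D_4}$ and Bob excludes $\ket{D_1},\ket{D_2}$. No argument for ``not locally antidistinguishable'' can succeed; only the global claim of the statement is true.
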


\begin{proof}
    For strong global antidistinguishability, consider the following sets $\mathcal{S}_1:=\{\ket{D_1},\ket{D_3},\ket{D_4}\}$ and $\mathcal{S}_2:=\{\ket{D_2},\ket{D_3},\ket{D_4}\}$, where $\mathcal{S}_1 \cup \mathcal{S}_2=\mathcal{S}_{\text{D}}$ are strongly globally antidistinguishable, by trivially checking the conditions by Caves \ref{eq:caves1} and \ref{eq:caves2}. To complete the proof, we invoke Theorem \ref{theorem1}.
Hence, $\mathcal{S}^D$ is strongly globally antidistinguishbale.

However, the local antidistinguishability of the set of states is not possible since for the local system of both Alice and Bob i.e. $\{\ \ket{0}, \ket{1}, \ket{+}, \ket{i_+} \}$  $\nexists\  \alpha_i>0$ such that $\alpha_1\ketbra{0}{0}+\alpha_1\ketbra{1}{1}+\alpha_1\ketbra{+}{+}+\alpha_1\ketbra{i_+}{i_+}=\mathbb{I}$ which is an if and only in criteria for antidistinguishability in $\mathbb{C}^2$\cite{Heinosaari_2018}\end{proof}

Having established that any ensemble of mutually orthogonal multipartite pure states is locally antidistinguishable, we find that the Bell basis $\mathcal{S}^{\text{Bell}}$—while nonlocal in the CLSD paradigm—becomes local under the LSAD framework. By contrast, the Duan states $\mathcal{S}^D$ remain nonlocal even within the LSAD paradigm. This result is significant: it demonstrates that the LSAD framework provides a refined metric to distinguish the "strength" of nonlocality between these two ensembles. Specifically, $\mathcal{S}^{\text{Bell}}$ and $\mathcal{S}^D$ are operationally equivalent under CLSD, yet their disparate behavior under LSAD reveals a nuanced hierarchy in their nonlocal properties.

Similarly, we will show that the CLSD task also acts as a diagnostic tool to distinguish between two sets of states that appear identical—or local—within the LSAD paradigm.

To see this, consider the following set of states:

\begin{align}\label{NL_1}
\mathcal{S}_{\text{NL}_1} \equiv \{\ket{0}_A\ket{+}_B,\ket{+}_A\ket{0}_B, \ket{i_+}_A\ket{i_+}_B\} 
 \end{align}

\begin{proposition}
The state ensemble $\mathcal{S}_{\text{NL}_1}$ does not permit CLSD. Furthermore, despite being globally antidistinguishable, the states are antidistinguishable locally.
\end{proposition}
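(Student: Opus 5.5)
The plan has three parts, matching the three assertions of the proposition: failure of CLSD, global antidistinguishability, and local antidistinguishability. I would treat the first two with existing criteria and put most of the effort into an explicit LOCC protocol for the third.

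\emph{Failure of CLSD.} The three states $\ket{0}\ket{+}$, $\ket{+}\ket{0}$, $\ket{i_+}\ket{i_+}$ are linearly independent, so they are globally conclusively distinguishable \cite{CheflesGlobal}. To rule out CLSD I would use Chefles' criterion for conclusive local discrimination of product states \cite{Chefles}. For each state $\ket{\psi_k}$ one needs a product vector that has nonzero overlap with $\ket{\psi_k}$ and is orthogonal to the other two. A product vector $\ket{a}\ket{b}$ orthogonal to $\ket{0}\ket{+}$ and $\ket{+}\ket{0}$ must satisfy one of four cases:
\begin{itemize}
\item $a=\ket{1}$ and $a=\ket{-}$, which is impossible;
\item $a=\ket{1}$ and $b=\ket{1}$;
\item $b=\ket{-}$ and $a=\ket{-}$;
\item $b=\ket{-}$ and $b=\ket{1}$, which is impossible.
\end{itemize}
So the only candidates are $\ket{1}\ket{1}$ and $\ket{-}\ket{-}$. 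I would then check the overlap of these candidates with $\ket{i_+}\ket{i_+}$, together with the analogous conditions for the other two states, to show that some state admits no valid product vector, so CLSD fails.

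\emph{Global antidistinguishability.} This follows from the Caves criterion \eqref{eq:caves1}--\eqref{eq:caves2}. Each pairwise overlap factorises into a product of two single-qubit overlaps, each equal to $1/2$, so $x_{12}=x_{13}=x_{23}=1/4$. Then the sum is $3/4<1$, and $(1/4)^2=1/16\geq 4/64=1/16$, so \eqref{eq:caves2} holds with equality.

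\emph{Local antidistinguishability.} I would build an LOCC protocol in which every branch ends with at least one state excluded. The natural first move is for Alice to measure in a basis adapted to her reduced states, for instance $\{\ket{0},\ket{1}\}$. The outcome $\ket{1}$ excludes $\ket{0}\ket{+}$. On the outcome $\ket{0}$, Bob must finish the exclusion among his conditional states $\{\ket{+},\ket{0},\ket{i_+}\}$. I would then symmetrise over the three relabellings, with Alice measuring in the $\{\ket{+},\ket{-}\}$ or $\{\ket{i_+},\ket{i_-}\}$ basis instead, and combine the resulting branches with shared randomness as in Theorem~\ref{theorem1}.

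\textbf{Main obstacle.} The hard step is making the non-excluding branch work. Bob's conditional set $\{\ket{+},\ket{0},\ket{i_+}\}$ has all pairwise overlaps equal to $1/2$. It therefore violates \eqref{eq:caves1}, and an exclusion POVM for it would need rank-one elements along $\ket{1},\ket{-},\ket{i_-}$. Their Bloch vectors, $-\hat z,-\hat x,-\hat y$, cannot sum to a multiple of the identity with positive weights. So Bob cannot complete the task with a single measurement on that branch.

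What I would try first is a multi-round protocol in which Alice's measurement elements are not drawn only from the three bases. The aim is that each of her outcomes either excludes a state directly or leaves Bob with a pair of conditional states that are mutually orthogonal. I would also check carefully whether such a protocol exists at all. The same Bloch-vector obstruction applies to Alice's rank-one elements, and this suggests the local claim may require a weaker notion of exclusion than the strong definition used in this paper.
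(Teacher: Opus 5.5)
Your global antidistinguishability computation is correct (all overlaps are $1/4$, with equality in \eqref{eq:caves2}) and matches the paper. The other two parts of your plan cannot be completed, because the statement as printed inverts both of its local claims. The paper's own proof, and its later text (which says $\mathcal{S}_{\text{NL}_1}$ ``showed nonlocality in the LSAD paradigm'' and lists it among sets allowing CLSM), establishes that $\mathcal{S}_{\text{NL}_1}$ \emph{does} permit CLSD and is \emph{not} locally antidistinguishable.

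Step (i) breaks at exactly the check you defer. Since $\langle 11|i_+i_+\rangle=-1/2\neq0$ and $\langle --|i_+i_+\rangle=-\iota/2\neq0$, both of your candidates work for $\ket{i_+}\ket{i_+}$. Likewise, $\ket{-}\ket{i_-}$ and $\ket{i_-}\ket{1}$ are orthogonal to $\ket{+}\ket{0}$ and to $\ket{i_+}\ket{i_+}$, but not to $\ket{0}\ket{+}$. By the $A\leftrightarrow B$ symmetry, $\ket{i_-}\ket{-}$ and $\ket{1}\ket{i_-}$ serve $\ket{+}\ket{0}$. So Chefles' criterion is satisfied. The paper implements this by having each party measure $\{\tfrac13\ketbra{1}{1},\tfrac13\ketbra{-}{-},\tfrac13\ketbra{i_-}{i_-},M_4\}$. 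The joint outcomes $(1,1)$ and $(-,-)$ identify $\ket{i_+}\ket{i_+}$; $(-,i_-)$ and $(i_-,1)$ identify $\ket{0}\ket{+}$; and $(i_-,-)$ and $(1,i_-)$ identify $\ket{+}\ket{0}$.

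For step (iii), your suspicion is right, and the situation is worse than you suggest: weakening the exclusion notion does not rescue the claim. Your Bloch-vector obstruction, lifted to separable measurements, proves the opposite. Any LOCC outcome is a separable operator $E=\sum_r\ketbra{\alpha_r}{\alpha_r}\otimes\ketbra{\beta_r}{\beta_r}$. It excludes $\ket{a_j}\ket{b_j}$ only if every term has $\ket{\alpha_r}\propto\ket{a_j^\perp}$ or $\ket{\beta_r}\propto\ket{b_j^\perp}$. For this set, all six perpendicular vectors lie in $\{\ket{1},\ket{-},\ket{i_-}\}$, and each of these has zero expectation value of $w:=\mathbb{I}+\sigma_x+\sigma_y+\sigma_z$. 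Hence $W:=w\otimes w$ satisfies $\operatorname{Tr}(WE)=0$ for every outcome that excludes some state, while $\operatorname{Tr}W=4$. So no separable POVM, and a fortiori no LOCC one, can exclude a state in every run, even if the outcome-relevance condition is dropped.

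The paper reaches this conclusion more tersely: it notes that each party's local set $\{\ket{0},\ket{+},\ket{i_+}\}$ violates \eqref{eq:caves1} and invokes the cited result for product states. The witness $W$ gives a self-contained proof of the same step. The provable statement is therefore: globally antidistinguishable, locally conclusively distinguishable, but not locally antidistinguishable.
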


\begin{proof}
    The global anti-distinguishability of the set $\mathcal{S}_{NL_1}$ is ensured since they satisfy the if and only if conditions \ref{eq:caves1} and \ref{eq:caves2}. However, the local system of each party is the set $\{\ket{0}, \ket{+}, \ket{+i}\}$ which violates condition \ref{eq:caves1} and thus are not antidistinguishable, the non-antidistinguishability of the local system ensures the local non-antidistinguishability of $\mathcal{S}_{NL_1}$ since they are product states \cite{manna2026nonlocalityentanglementexclusionquantum}. On the other hand , both the local parties perform the following measurement:   
  \begin{equation*}
\begin{aligned}
    \mathcal{M} \equiv \big\{ &M_1 = \tfrac{1}{3}\ketbra{1}{1}, M_2 = \tfrac{1}{3}\ketbra{-}{-}, \\
    &M_3 = \tfrac{1}{3}\ketbra{i_-}{i_-}, M_4 = I - \sum_{j=1}^3 M_j \big\}
\end{aligned}
\end{equation*}
    
The parties then classically communicate to each other and tally their outcomes. For example, an outcome corresponding to the projector onto $11$ conclusively distinguishes the state $\ket{i_+i_+}$. Hence, like this each and every member of  $\mathcal{S}_{NL_1}$ can be conclusively distinguished.
\end{proof}

Let us now consider another set of states:  the anti-parallel double-SIC ensemble \cite{CLSM}
\begin{align}\label{anti-SIC}
\mathcal{S}^{\uparrow\hspace{-.05cm}\downarrow}:=\left\{
\ket{\gamma_i}:=\ket{s_i}_A\otimes\ket{s^\perp_i}_B~\text{s.t.}~\ket{s_i}\in\mathrm{SIC}
\right\},  
\end{align}
where \(\ket{\psi^\perp}\) denotes the state orthogonal to \(\ket{\psi}\) and 
SIC or the symmetric informationally complete ensemble of the qubit refers to:
\begin{align}\label{SIC}
\mathrm{SIC}
\equiv \left\{\begin{aligned}
&~~|s_1\rangle:= |0\rangle,~\text{and ~for}~j\in\{2,3,4\}\\
&|s_j\rangle:= \frac{1}{\sqrt{3}} \left( |0\rangle + e^{2\pi\iota (j-2)/3}\sqrt{2} |1\rangle \right)
\end{aligned} \right\}.   
\end{align}

Mirroring the behavior of $\mathcal{S}_{NL_1}$ in the antidistinguishability scenario, this set exhibits no nonlocality, as it is both globally as well as local antidistinguishable.

\begin{proposition}
  The set $\mathcal{S}^{\uparrow\hspace{-.05cm}\downarrow}$ is both globally and locally antidistinguishable.
  
\end{proposition}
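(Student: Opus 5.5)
The plan is to give an explicit one-way (in fact communication-free) local protocol that excludes a state with certainty. Global antidistinguishability then follows at once, since any LOCC measurement is in particular a global POVM. No decomposition via Theorem~\ref{theorem1} should be needed here: the qubit SIC structure already supplies a rank-one POVM adapted to the ensemble.

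\textbf{Step 1 (a POVM on Alice's side).} The key fact is the frame property of the qubit SIC, $\sum_{j=1}^4 \ketbra{s_j}{s_j} = 2\,\mathbb{I}$, which follows from the tetrahedral symmetry of the Bloch vectors (they sum to zero). I will verify it directly from Eq.~\eqref{SIC}: the diagonal entries give $1+3\cdot\tfrac13=2$ and $3\cdot\tfrac23=2$, and the off-diagonal entries are proportional to $\sum_{j=2}^4 e^{\pm 2\pi\iota (j-2)/3}=0$. Writing $\ketbra{s_j^\perp}{s_j^\perp}=\mathbb{I}-\ketbra{s_j}{s_j}$, we obtain $\sum_j \ketbra{s_j^\perp}{s_j^\perp}=4\mathbb{I}-2\mathbb{I}=2\mathbb{I}$. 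Hence
\[
\mathcal{M}_A\equiv\left\{E_j=\tfrac12\ketbra{s_j^\perp}{s_j^\perp}\right\}_{j=1}^4
\]
is a valid POVM on Alice's qubit. Alternatively, Bob can measure $\{\tfrac12\ketbra{s_j}{s_j}\}_{j=1}^4$; this is the mirrored choice and works equally well.

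\textbf{Step 2 (perfect exclusion and outcome relevance).} Alice performs $\mathcal{M}_A$ on her share and announces outcome $j$, and the parties exclude $\ket{\gamma_j}$. Exclusion is perfect because
\[
\operatorname{Tr}\!\left[\ketbra{\gamma_j}{\gamma_j}\,(E_j\otimes\mathbb{I}_B)\right]=\tfrac12|\langle s_j^\perp|s_j\rangle|^2=0,
\]
which is Condition~1. For Condition~2, the SIC overlaps $|\langle s_j|s_k\rangle|^2=\tfrac13$ for $j\neq k$ give
\[
\operatorname{Tr}\!\left[\ketbra{\gamma_k}{\gamma_k}\,(E_j\otimes\mathbb{I}_B)\right]=\tfrac12\left(1-\tfrac13\right)=\tfrac13>0 .
\]
So every outcome occurs with nonzero probability, and each state has a dedicated outcome excluding it. This establishes strong local antidistinguishability.

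\textbf{Step 3 (global antidistinguishability).} The operator $E_j\otimes\mathbb{I}_B$ is itself a global POVM satisfying both conditions, so global antidistinguishability follows immediately.

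The only step requiring real care is the SIC frame identity in Step 1, specifically checking that the off-diagonal terms cancel for the phase convention used in Eq.~\eqref{SIC}. Once that identity is in hand, the remaining verifications are one-line overlap computations.
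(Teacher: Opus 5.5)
Your proposal is correct and follows the same route as the paper: Alice measures $\{\tfrac12\ketbra{s_j^\perp}{s_j^\perp}\}_j$ (or Bob uses the mirrored SIC POVM), and on outcome $j$ the parties exclude $\ket{\gamma_j}$. Your explicit checks of the SIC frame identity, of outcome relevance, and of the fact that this local POVM already witnesses global antidistinguishability fill in details that the paper leaves implicit.
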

\begin{proof}
    
 To see this, either Alice or Bob starts the protocol. If Alice starts it, she performs the measurement $\mathcal{M}:\{\frac{1}{2}\ket{s^\perp_i}\bra{s^\perp_i}\}_i$. If she obtains an outcome corresponding to the projector $\ket{s^\perp_{i^*}}\bra{s^\perp_{i^*}}\}$, she tells Bob her outcome and both of them declare, that definitely the state  $\ket{\gamma_{i^*}}$ was not prepared and trivially the two parties are able to locally antidistinguish the states.
\end{proof}

However, the set $\mathcal{S}^{\uparrow\hspace{-.05cm}\downarrow}$ is globally conclusively distinguishable but not locally \cite{CLSM} which shows that CLSD task is able to make a distinction between two seemingly local set of states. Hence, this proves that the CLSD and the LSAD paradigms are inequivalent.

\subsection{Local Antimarkability}

Moving forth, we shall now introduce the task of local state antimarking (LSAM). We will prove that ensembles which are not locally antimarkableable exhibit a stronger form of nonlocality than those found in the LSAD paradigm. First, however, we review the definition of local state marking (LSM)—originally introduced as a generalization of local state discrimination (LSD) \cite{LSM}—which serves as the conceptual foundation for LSAM.

\begin{definition}\label{defLSM}
(Sen \textit{et  al.} \cite{LSM}). Given \( m \) states chosen randomly from a known set of mutually orthogonal multipartite quantum states \( \mathcal{S} \equiv \left\{ \ket{\psi_j} \right\}_{j=1}^N \), the \( m \)-LSM task demands correctly answering (or marking),  each of the \( m \) states via LOCC, thus figuring out the permutation of the $m$ states in the process.
\end{definition}
Here, $m$ ranges from 1 to $|\mathcal{S}|$, with the special case $m = 1$ corresponding to the standard local state discrimination (LSD) task. In $m$-LSM task, the parties \textit{must} correctly determine the exact sequence of the $m$ states. Since instead of discrimination, we have shifted our objective to exclusion, more importantly antidistinguishability, a natural question then arises: if the LSM task is modified to eliminate sequences instead, what new and interesting consequences might emerge? We address this by defining the local state antimarking task.

\begin{figure}[b!]
\centering
\includegraphics[width=0.5\textwidth]{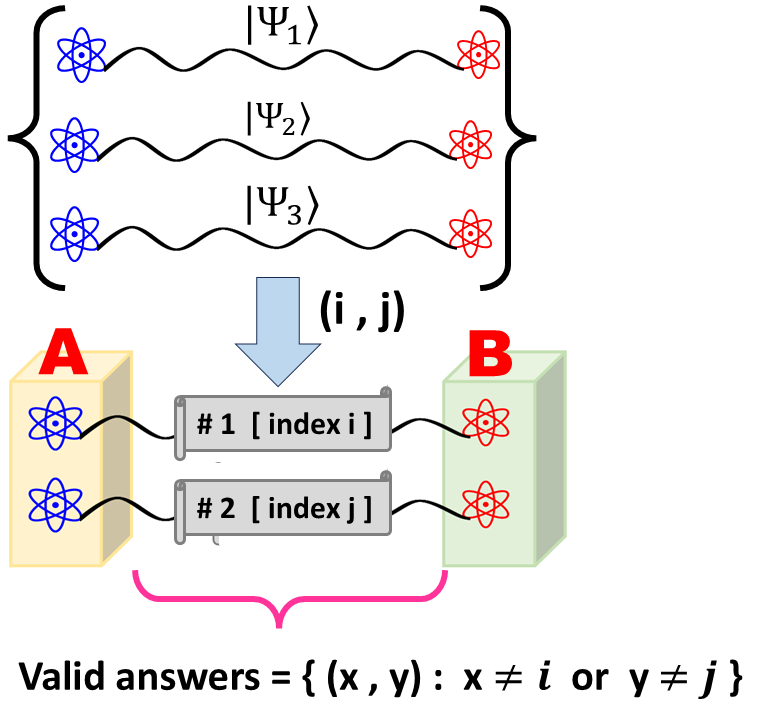}
\caption{(Color online) In the bipartite scenario, the task of $n=2-\text{LSAM}$ is demonstrated for a set containing three bipartite states: $\mathcal{S} := \{\ket{\Psi_1}, \ket{\Psi_2}, \ket{\Psi_3}\}$. Two states, chosen randomly from the set, are distributed between spatially separated Alice and Bob without revealing their identities. The possible input pairs are: $\{\ket{\Psi_1\Psi_2}, \ket{\Psi_1\Psi_3}, \ket{\Psi_2\Psi_1}, \ket{\Psi_2\Psi_3}, \ket{\Psi_3\Psi_1}, \ket{\Psi_3\Psi_2}\}$. They must eliminate the sequence of the state shared between them via LOCC.}
\label{CLSMfig1}
\end{figure}

\begin{definition}
$[(n,m)-\text{LSAM}]$ From a known ensemble of multipartite states $\mathcal{S} := \{|\psi_{i}\rangle_{A_1...A_K}\}_{i=1}^N$, a sequence of $n$ states $|\psi_{i_{1}}\rangle \otimes |\psi_{i_{2}}\rangle \otimes \dots \otimes |\psi_{i_{n}}\rangle$ is drawn randomly and the respective subsystems are distributed among the spatially separated parties without revealing the actual preparation indices $(i_1, i_2, \dots, i_n)$. The $(n,m)$-LSAM task requires the parties to perfectly eliminate at least $m$ sequences in every experimental run. Operationally, this corresponds to outputting $m$ distinct index strings $(x_1, x_2, \dots, x_n)$, each guaranteed to satisfy $(x_1, x_2, \dots, x_n) \neq (i_1, i_2, \dots, i_n)$.
\end{definition}
Here, just like the local state marking task, $n$ ranges from 1 to $|\mathcal{S}|$, with the special case $n = 1$ corresponding to the local antidistinguishability (LSAD) task we have been discussing before.\\

Intuitively, any ensemble that is locally antidistinguishable also admits local antimarking. We formalize this relationship in the following lemma.

\begin{lemma}\label{lemma n to n'}
    Let $\mathcal{S}:=\{\ket{\psi_i}\}_{i=1}^N$ be an ensemble of $N$ multipartite quantum states. If $\mathcal{S}$ permits $(n,m)$-LSAM, then it also permits $(n',m')$-LSAM for any integer $n'$ such that $1 \leq n < n' \leq N$, with$$m' = m \frac{(N-n)!}{(N-n')!}$$
\end{lemma}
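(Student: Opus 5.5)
The plan is to reduce the larger task to the smaller one by running the $(n,m)$-LSAM protocol on the first $n$ positions only and leaving the remaining $n'-n$ subsystems untouched. Suppose the referee distributes $|\psi_{i_1}\rangle\otimes\cdots\otimes|\psi_{i_{n'}}\rangle$ with $(i_1,\dots,i_{n'})$ a non-repetitive string. The reduced state on the first $n$ slots is $|\psi_{i_1}\rangle\otimes\cdots\otimes|\psi_{i_n}\rangle$. Averaged over the random draw, this prefix is itself a uniformly random non-repetitive $n$-string, since the marginal of a uniform $n'$-permutation on its first $n$ coordinates is uniform. So the parties can apply the LOCC protocol guaranteed by the hypothesis to these $n$ subsystems. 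In every run it outputs $m$ distinct $n$-strings $(x_1,\dots,x_n)$, each certified to differ from $(i_1,\dots,i_n)$.

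Next comes the lifting step. For each excluded prefix $x=(x_1,\dots,x_n)$, every non-repetitive extension $(x_1,\dots,x_n,y_{n+1},\dots,y_{n'})$ is also certainly not the supplied $n'$-string, because it already disagrees on the first $n$ coordinates. The tail $(y_{n+1},\dots,y_{n'})$ must use distinct indices from $\{1,\dots,N\}\setminus\{x_1,\dots,x_n\}$. The number of such ordered choices is
\begin{equation*}
(N-n)(N-n-1)\cdots(N-n'+1)=\frac{(N-n)!}{(N-n')!}.
\end{equation*}
Extensions of distinct prefixes are distinct strings, so the parties obtain $m\,\frac{(N-n)!}{(N-n')!}$ distinct excluded $n'$-sequences in every run, which is $m'$. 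The procedure uses only LOCC: the original protocol plus classical post-processing of its output, with no operation on the last $n'-n$ subsystems.

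The main obstacle is the strong (outcome-relevance) requirement inherited from the definition of antidistinguishability. We would also want every $n'$-sequence to be excluded by some outcome occurring with nonzero probability. The step I would check first is whether the hypothesis supplies this for every $n$-string. If the $(n,m)$ protocol is strong, so that every $n$-string is excluded with positive probability for some input, then every $n'$-string is excluded for inputs whose prefix triggers exclusion of its own prefix. If this fails because of constraints on which prefixes are compatible with which tails, I would symmetrise: use shared randomness to apply the $(n,m)$ protocol to a uniformly chosen set of $n$ positions, as in Theorem~\ref{theorem1}, and mix the protocols. This ensures every $n'$-string has a relevant excluding outcome while preserving the count $m'$ in each run.
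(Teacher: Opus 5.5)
Your proposal is correct and follows the paper's proof. You run the $(n,m)$ protocol on the first $n$ subsystems and lift each excluded prefix to its $\frac{(N-n)!}{(N-n')!}$ non-repetitive extensions, and distinct prefixes give distinct $n'$-strings. The paper does not address outcome relevance, and your first check already settles it: an $n'$-input whose prefix is a witnessing $n$-string triggers the lifted outcome with the same positive probability as in the $n$-protocol. Every valid $n$-prefix extends because $n'\le N$, so the symmetrisation fallback is unnecessary.
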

\begin{proof}
% If $\mathcal{S}:=\{\ket{\psi_i}\}_{i=1}^N$ is $(n,m)$-LSAM, this implies there exists a LOCC protocol $\mathcal{P}_{n,m}$ to eliminate $m$ states from the set $\mathcal{S}^{[n]}:=\{\ket{\psi_{i_1}}\ket{\psi_{i_2}}\dots\ket{\psi_{i_n}}\}_{i_1,i_2,\dotsi_n=1}^N$. Now, if $n'$ states are now distributed among the parties,  all the parties apply $\mathcal{P}_{n,m}$ for the first $n$ states of the set $\mathcal{S}^{[n']}:=\{\ket{\psi_{i_1}}\ket{\psi_{i_2}}\dots\ket{\psi_{i_{n'}}}\}_{i_1,i_2,\dotsi_{n'}=1}^N$. Hence, they can eliminate $\frac{(N-n)!}{(N-n')!}m$ states in each run. Since for a given state $\ket{\Phi}=\ket{\psi_{j_1}}\ket{\psi_{j_2}}\dots\ket{\psi_{j_n}}\in \mathcal{S}^{[n]}$ there exist $\frac{(N-n)!}{(N-n')!}$ states in ${S}^{[n']}$ with the first $n$ states as $\ket{\Phi}$ so eliminating $1$ state from $\mathcal{S}^{[n]}$ implies eliminating $\frac{(N-n)!}{(N-n')!}$ states of $\mathcal{S}^{[n']}$, this concludes the proof.

If the ensemble $\mathcal{S} := \{|\psi_i\rangle\}_{i=1}^N$ is $(n, m)$-LSAM, this implies there exists a LOCC protocol $\mathcal{P}_{n,m}$ that can definitively eliminate $m$ states from the set of valid length-$n$ sequences, denoted as $\mathcal{S}_{[n]} := \{|\psi_{i_1}\rangle |\psi_{i_2}\rangle \dots |\psi_{i_n}\rangle\}_{i_1, i_2, \dots, i_n = 1}^N$(where the indices are mutually distinct). Now, suppose a longer sequence of $n'$ states is distributed among the parties, which we denote as the set $\mathcal{S}^{[n']} := \{|\psi_{i_1}\rangle |\psi_{i_2}\rangle \dots |\psi_{i_{n'}}\rangle\}_{i_1, i_2, \dots, i_{n'} = 1}^N$. To perform the task, all the parties simply apply the existing protocol $\mathcal{P}_{n,m}$ exclusively to the first $n$ states of their distributed sequence. Consider a given eliminated state $|\Phi\rangle = |\psi_{j_1}\rangle |\psi_{j_2}\rangle \dots |\psi_{j_n}\rangle \in \mathcal{S}^{[n]}$. Because the states are selected without repetition, the number of ways to extend this specific sequence of length $n$ to a sequence of length $n'$ is determined by selecting the remaining $n'-n$ states from the $N-n$ available states left in $\mathcal{S}$. Thus, there exist exactly $\frac{(N-n)!}{(N-n')!}$ states in $\mathcal{S}^{[n']}$ that share $|\Phi\rangle$ as their first $n$ states. Consequently, eliminating 1 state from $\mathcal{S}^{[n]}$ automatically implies eliminating $\frac{(N-n)!}{(N-n')!}$ states from $\mathcal{S}^{[n']}$. Since the protocol $\mathcal{P}_{n,m}$ successfully eliminates $m$ distinct states from $\mathcal{S}^{[n]}$ in each run, the total number of states eliminated from $\mathcal{S}^{[n']}$ is $m' = m \frac{(N-n)!}{(N-n')!}$. 
This concludes the proof.
\end{proof}

However, the converse of the lemma is not true.  We show this by considering the following set of states: \begin{equation*}
    \mathcal{S}_{\text{PBR}} = \left\{ \ket{00}_{AB}, \ket{0+}_{AB}, \ket{+0}_{AB}, \ket{++}_{AB} \right\}
\end{equation*}
\begin{proposition}\label{ex-PBR}
    
$\mathcal{S}_{PBR}$   does not allow $(1,1)$-LSAM. Hence, this set of states is not locally antidistinguishable. However, it allows $(2,3)$-LSAM.

\end{proposition}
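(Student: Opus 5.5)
The plan has two independent parts: a no-go argument for $(1,1)$-LSAM, which is LSAD, and an explicit protocol for $(2,3)$-LSAM. For both parts, write the local supports as $\{\ket{0},\ket{+}\}$ on each side, so that $\mathcal{S}_{\text{PBR}}=\{\ket{xy}: x,y\in\{0,+\}\}$.

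\emph{No $(1,1)$-LSAM.} I would work with the fact that every LOCC measurement is separable. So each POVM element $\Pi_j$ of a would-be local antidistinguishing measurement decomposes as $\Pi_j=\sum_k c_k\,\ketbra{a_k}{a_k}\otimes\ketbra{b_k}{b_k}$ with $c_k>0$. The constraint $\mathrm{Tr}(\Pi_j\ketbra{xy}{xy})=0$ must then hold term by term, giving $\langle a_k|x\rangle=0$ or $\langle b_k|y\rangle=0$. In the qubit case this forces $\ket{a_k}\in\{\ket{1},\ket{-}\}$ or $\ket{b_k}\in\{\ket{1},\ket{-}\}$ up to phase. Hence the identity would have to be written as a positive combination of product projectors in which at least one factor is $\ketbra{1}{1}$ or $\ketbra{-}{-}$.

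I would then group the terms into three families: those with $a_k=\ket{1}$, those with $a_k=\ket{-}$, and the remaining terms, which must have $b_k\in\{\ket{1},\ket{-}\}$. Next I compress the identity with $\bra{0}_A(\cdot)\ket{0}_A$ and $\bra{+}_A(\cdot)\ket{+}_A$. The first family drops out of the first compression and the second family drops out of the second. The remaining Bob-side operators lie in the cone generated by $\ketbra{1}{1}$ and $\ketbra{-}{-}$ plus arbitrary positive parts. I expect to derive a contradiction with the requirement that both compressions equal $\mathbb{I}_B$, using the qubit criterion of \cite{Heinosaari_2018}: $\ketbra{1}{1}$ and $\ketbra{-}{-}$ cannot combine to $\mathbb{I}$. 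This is the step I regard as the main obstacle, because the bookkeeping of mixed terms across the two compressions is delicate. If the direct compression argument stalls, my fallback is the reasoning of Manna \textit{et al.}: since the states are product states and each local set $\{\ket{0},\ket{+}\}$ is not antidistinguishable (two non-orthogonal states), the first nontrivial local measurement can never yield a certain exclusion, and this propagates through the rounds.

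\emph{$(2,3)$-LSAM.} Here there are $12$ ordered pairs of distinct states. Alice holds $A_1A_2$, whose state is $\ket{x_1}\ket{x_2}$ with $x_i\in\{0,+\}$. This is an element of the PBR set, which is \emph{globally} antidistinguishable by the PBR entangled measurement, and that measurement is local from Alice's point of view. So Alice performs it on $A_1A_2$ and excludes some pair $(x_1,x_2)$; likewise Bob excludes some $(y_1,y_2)$ on $B_1B_2$.

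The two exclusions translate into sequence exclusions as follows. Every sequence whose first state has Alice part $x_1$ and second has $x_2$ is excluded. There are $4$ such sequences if $x_1\neq x_2$, and $2$ if $x_1=x_2$ (because of non-repetition), and the same count holds for Bob. Two excluded families overlap in at most one sequence, namely the one whose states are fixed by both $(x_1,x_2)$ and $(y_1,y_2)$. Moreover, if $x_1=x_2$ and $y_1=y_2$, that common sequence would repeat a state and so does not exist. The cases then give at least $4+4-1$, $4+2-1$, or $2+2$ distinct eliminated sequences, and in every case at least $3$. This establishes $(2,3)$-LSAM and shows the converse of Lemma~\ref{lemma n to n'} fails.
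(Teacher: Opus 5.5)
Your $(2,3)$-LSAM half is correct and uses the paper's protocol: each party applies the PBR basis $\{\ket{\xi_k}\}$ to its two qubits. Your counting, which the paper leaves as ``straightforward to verify'', is right. It even gives at least four eliminated sequences per run ($7$, $5$ or $4$ in your three cases).

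The weak point is the no-go half. The paper only invokes Manna \textit{et al.} on the local sets $\{\ket{0},\ket{+}\}$. Your separable-measurement route is a different, self-contained argument, but as sketched it cannot close. Write $\mathbb{I}\otimes\mathbb{I}=\ketbra{1}{1}\otimes B_1+\ketbra{-}{-}\otimes B_2+R$. Here $R$ collects the remaining terms (Bob factor $\ketbra{1}{1}$ or $\ketbra{-}{-}$), and $R_v:=\bra{v}_AR\ket{v}_A$ lies in the cone of $\ketbra{1}{1},\ketbra{-}{-}$.

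Your two compressions give $\tfrac12B_2+R_0=\mathbb{I}_B$ and $\tfrac12B_1+R_+=\mathbb{I}_B$. Since $B_1,B_2\geq0$ are otherwise free, these are trivially satisfiable (e.g.\ $R=0$, $B_1=B_2=2\mathbb{I}$). So no contradiction can come from them, and the qubit criterion you invoke never gets to act.

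You need a constraint coupling $B_1$ and $B_2$. The $\bra{1}_A$ compression gives $B_1+\tfrac12B_2+R_1=\mathbb{I}_B$. Subtracting twice the $\bra{+}_A$ equation and the $\bra{0}_A$ equation leaves $R_1-2R_+-R_0=-2\,\mathbb{I}_B$. This is impossible, because the real linear span of $\ketbra{1}{1}$ and $\ketbra{-}{-}$ does not contain $\mathbb{I}$.

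A shorter route: $\sum_j\bra{00}\Pi_j\ket{++}=\braket{00|++}=\tfrac12\neq0$. So some separable $\Pi_j$ contains a product term $\ket{a}\ket{b}$ with $\ket{a}$ and $\ket{b}$ each non-orthogonal to both $\ket{0}$ and $\ket{+}$. That outcome overlaps all four states and excludes none.

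Your fallback is essentially the paper's citation, but the reason you attach to it is not right as stated. A first local measurement \emph{can} yield a certain exclusion on some outcomes: Alice's outcome $\ketbra{1}{1}$ rules out $\ket{00}$ and $\ket{0+}$. What actually propagates is this. Since $\braket{0|+}\neq0$, some Kraus branch $K$ has $\bra{0}K^\dagger K\ket{+}\neq0$. Iterating, some branch keeps both parties' conditional local pairs non-orthogonal (hence nonzero), and its leaf excludes nothing. This covers finite-round LOCC; the separable argument above covers the general case.

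This specific structure is genuinely needed. The bare principle ``product states with non-antidistinguishable local sets are not LSAD'' fails for strong antidistinguishability. For example, $\{\ket{00},\ket{1+},\ket{+1}\}$ has both local sets equal to $\mathcal{S}_w$, yet it is LSAD: a randomly chosen party measures in the computational basis.
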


\begin{proof}
       
Since the local part of both Alice and Bob is $\{\ket{0},\ket{+}\}$ are not antidistinguishable which assures that $\mathcal{S}_PBR$ is not locally antidistinguishable\cite{manna2026nonlocalityentanglementexclusionquantum} i.e. not (1,1)-LSAM . 

Now, let us consider the following set of states, for the $n=2$ antimarking task. 
 \begin{equation}
\mathcal{S}^{[2]}_{\text{PBR}} = \left\{
\begin{aligned}
    &\ket{00}\ket{0+}, &&\ket{0+}\ket{00}, \\
    &\ket{0+}\ket{0+}, &&\ket{00}\ket{+0}, \\
    &\ket{0+}\ket{+0}, &&\ket{0+}\ket{++}, \\
    &\ket{+0}\ket{00}, &&\ket{+0}\ket{0+}, \\
    &\ket{++}\ket{0+}, &&\ket{+0}\ket{+0}, \\
    &\ket{+0}\ket{++}, &&\ket{++}\ket{+0}
\end{aligned}
\right\}_{A_1A_2|B_1B_2}
\end{equation}
    Alice and Bob perform the projective measurement $\mathcal{M}_{\text{PBR}} = \{ \ket{\xi_k}\bra{\xi_k} \}_{k=1}^4$ on their respective systems $A_1A_2$ and $B_1B_2$, where 
\begin{equation}
\begin{aligned}
    \ket{\xi_1} &= \tfrac{1}{\sqrt{2}}(\ket{01} + \ket{10}), & \ket{\xi_2} &= \tfrac{1}{\sqrt{2}}(\ket{0-} + \ket{1+}), \\
    \ket{\xi_3} &= \tfrac{1}{\sqrt{2}}(\ket{+1} + \ket{-0}), & \ket{\xi_4} &= \tfrac{1}{\sqrt{2}}(\ket{+-} + \ket{-+}).
\end{aligned}
\end{equation}
It is straightforward to verify that for every joint outcome $(k, l)$, the parties can eliminate at least three states from the ensemble.
\end{proof}
This establishes that nonlocality in the LSAM paradigm is a stronger form of nonlocality than the LSAD paradigm.

We now show that the product states used by Manna \textit{et al.} \cite{manna2026nonlocalityentanglementexclusionquantum} to demonstrate LSAD-nonlocality actually lose their nonlocal properties within the LSAM paradigm.

\begin{proposition}\label{ex-theta}
The ensemble is defined as $\mathcal{S}_{\theta} = \{ \ket{\psi_1}_{AB}, \ket{\psi_2}_{AB}, \ket{\psi_3}_{AB}, \ket{\psi_4}_{AB} \}$, where:
\begin{equation}
\begin{aligned}
    \ket{\psi_1}_{A|B} &= \ket{\theta_+}_A\ket{\theta_+}_B, & \ket{\psi_2}_{A|B} &= \ket{\theta_+}_A\ket{\theta_-}_B, \\
    \ket{\psi_3}_{A|B} &= \ket{\theta_-}_A\ket{\theta_+}_B, & \ket{\psi_4}_{A|B} &= \ket{\theta_-}_A\ket{\theta_-}_B,
\end{aligned}
\end{equation}
with $\ket{\theta_\pm} = \cos\theta\ket{0} \pm \sin\theta\ket{1}$. For $\cos2\theta \leq \sqrt{2}-1$, the set $\mathcal{S}$ is not $(1,2)$-LSAM\cite{manna2026nonlocalityentanglementexclusionquantum}, yet the ensemble $\mathcal{S}^{[2]}$ admits $(2,8)$-LSAM.
\end{proposition}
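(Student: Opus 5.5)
The first claim, that $\mathcal{S}_\theta$ fails single-copy local antidistinguishability for $\cos2\theta\leq\sqrt{2}-1$, I would take directly from Manna \textit{et al.} \cite{manna2026nonlocalityentanglementexclusionquantum}. The argument is the same as for $\mathcal{S}_{PBR}$ and $\mathcal{S}_{\text{D}}$: the local ensemble of each party is $\{\ket{\theta_+},\ket{\theta_-}\}$. These are two nonorthogonal pure states, so they are not antidistinguishable. Since the states are product states, this forbids LSAD. The real work is the positive statement, and I would obtain it by an explicit LOCC protocol, modelled on the one used for $\mathcal{S}_{PBR}$ in Proposition \ref{ex-PBR}.

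\textbf{Step 1: setup.} Write each element of $\mathcal{S}^{[2]}$ as $\ket{\psi_i}\ket{\psi_j}$ with $i\neq j$, regrouped as $\ket{\alpha}_{A_1A_2}\ket{\beta}_{B_1B_2}$. Here $\alpha=(a_1,a_2)$ and $\beta=(b_1,b_2)$ are strings in $\{\pm\}^2$, and $\ket{\alpha}=\ket{\theta_{a_1}}\ket{\theta_{a_2}}$. The non-repetition constraint $i\neq j$ says exactly $(a_1,b_1)\neq(a_2,b_2)$, leaving $16-4=12$ valid sequences. I would then record the ``degree'' of each local string, meaning the number of valid partners it has:
\begin{itemize}
\item a string with $a_1=a_2$ can only be paired with $\beta$ having $b_1\neq b_2$, so it has 2 partners;
\item a string with $a_1\neq a_2$ pairs with all 4 strings $\beta$, so it has 4 partners.
\end{itemize}
Bob's strings have the same degrees. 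If Alice's outcome excludes a set $X_A$ of her strings and Bob's outcome excludes a set $X_B$ of his, then the eliminated sequences are the valid pairs $(\alpha,\beta)$ with $\alpha\in X_A$ or $\beta\in X_B$. The task becomes a counting problem: every pair of outcomes must give at least 8 such pairs.

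\textbf{Step 2: choice of local measurements.} A single-qubit measurement cannot exclude $\ket{\theta_\pm}$, so each party must use an entangled two-qubit measurement on their pair of subsystems. This mirrors the vectors $\ket{\xi_k}$ in the proof of Proposition \ref{ex-PBR}. The local four-string ensemble $\{\ket{\theta_\pm}\ket{\theta_\pm}\}$ is precisely the PBR family. The threshold $\cos2\theta\leq\sqrt{2}-1$ is, I expect, exactly the regime in which this family admits orthonormal bases whose vectors are each orthogonal to prescribed local strings. The first thing I would try is an orthonormal basis $\{\ket{\xi_k(\theta)}\}$ in which each vector is orthogonal to two of the four strings. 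I would choose these pairs to contain mixed strings ($a_1\neq a_2$) wherever possible, since those carry degree 4. Concretely, I would start from the Bell-type ansatz $\ket{\xi}=\tfrac{1}{\sqrt2}(\ket{u}\ket{v}+\ket{u'}\ket{v'})$ built from vectors in $\mathrm{span}\{\ket{\theta_\pm},\ket{\theta_\pm^\perp}\}$. I would solve the orthogonality constraints and check that they are consistent precisely when $\cos2\theta\leq\sqrt{2}-1$.

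\textbf{Step 3: the counting check.} With Bob using the analogous basis, I would verify over all 16 outcome pairs $(k,l)$ that $\bigl|\{(\alpha,\beta)\ \text{valid}:\alpha\in X_A^{(k)}\ \text{or}\ \beta\in X_B^{(l)}\}\bigr|\geq 8$. By inclusion--exclusion this count equals
\[
\sum_{\alpha\in X_A}\deg\alpha+\sum_{\beta\in X_B}\deg\beta-\#\{\text{valid }(\alpha,\beta)\in X_A\times X_B\}.
\]
When both excluded pairs contain only mixed strings, this is $8+8-4=12$. The danger is outcomes whose excluded pairs consist of degree-2 strings.

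\textbf{Main obstacle.} I expect Step 2 to be the hard part: exhibiting orthonormal bases, rather than merely POVMs with an inconclusive element, whose exclusion pattern is rich enough for the bound in Step 3 to hold on every outcome pair. If pairwise exclusion turns out to be insufficient for some outcome pairs, I would fall back on one of two options. The first is one-way LOCC, in which Bob's basis is chosen conditioned on Alice's outcome, so that his excluded pair complements hers. The second is shared randomness among several such protocols, in the spirit of Theorem \ref{theorem1}.
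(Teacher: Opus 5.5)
Your overall architecture matches the paper's: each party measures its two qubits jointly, and the two exclusions combine. Your bookkeeping in Steps 1 and 3 is also right. But Step 2, which carries all the content, cannot work as you set it up.

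Unless $\cos2\theta=0$, no orthonormal basis $\{\ket{\xi_k}\}$ of $A_1A_2$ has every vector orthogonal to two of the four strings $\ket{\theta_a}\ket{\theta_b}$. For distinct strings, completeness gives
\[
\langle\alpha|\alpha'\rangle=\sum_{k=1}^{4}\langle\alpha|\xi_k\rangle\langle\xi_k|\alpha'\rangle .
\]
A vector orthogonal to two strings is non-orthogonal to at most the other two, so each $\ket{\xi_k}$ contributes to at most one of the six unordered pairs. Hence at most four of the six overlaps could be nonzero, whereas all six equal $\cos2\theta$ or $\cos^2 2\theta$. The same argument applied to $\langle\alpha|M_k|\alpha'\rangle$ shows that any measurement excluding a pair on every outcome has at least six outcomes, one for each pair. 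It is therefore necessarily non-projective. Your consistency check would fail for every nontrivial $\theta$, and the threshold cannot come out of a basis ansatz. The PBR basis of Proposition~\ref{ex-PBR} only has to exclude one string per outcome, which is why the analogy misleads.

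The fallbacks do not repair this. In the one-way version, some basis vector of Alice's excludes at most one string. Because the states are product, her outcome leaves Bob's local strings unchanged. A product outcome in which each party excludes at most one string leaves at least five valid sequences. So, conditioned on that outcome, every outcome of Bob's measurement must exclude a pair, which again no basis does. Shared randomness only mixes protocols, each of which must already succeed on every outcome.

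The missing ingredient is exactly a six-outcome POVM. Contrary to your worry, it has no inconclusive outcome: every outcome excludes one of the six pairs. The paper gets it by noting that each party's local ensemble on $A_1A_2$ (resp.\ $B_1B_2$) is a copy of $\mathcal{S}_\theta$ itself. That set is globally $2$-antidistinguishable by a cited result, so Alice and Bob each apply that measurement $\{\Pi_A,\dots,\Pi_F\}$ to their two qubits.

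The count then needs no attention to degrees. If each party excludes any two of its four strings, at most $2\times2=4$ of the $12$ sequences survive, so at least $8$ are eliminated; two ``same'' pairs give $4+4-0=8$.

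If you want to avoid the citation, the resolution-of-identity idea proves the $2$-antidistinguishability. A POVM with every outcome orthogonal to two strings exists iff the Gram matrix $G=g\otimes g$ is a sum of positive semidefinite matrices each supported on a $2\times2$ principal block. Here $g$ is the $2\times2$ Gram matrix of $\ket{\theta_\pm}$. That condition holds iff $G$ is scaled diagonally dominant. By the transitive symmetry of $G$ the scaling can be taken uniform, giving $2|\cos2\theta|+\cos^22\theta\le1$, i.e.\ $|\cos2\theta|\le\sqrt2-1$.

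This also shows two boundary issues. The statement tacitly assumes $\cos2\theta\ge0$. And your first part needs $\cos2\theta\neq0$: at $\theta=\pi/4$ the set is an orthonormal product basis, hence locally distinguishable.
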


\begin{proof}
Consider the $n=2$ antimarking scenario. The set of child states corresponds to the distinct permutations $(i,j)$ for $i \neq j$. By defining the permutation states as $\ket{\Psi_{i,j}}_{A_1A_2|B_1B_2}$ (which represents $\ket{\psi_i}_{A_1B_1} \otimes \ket{\psi_j}_{A_2B_2}$ rearranged to the local partition), the set $\mathcal{S}^{[2]}_{\theta}$ associated with the partition $A_1A_2|B_1B_2$ is defined as follows:
\begin{equation}
\mathcal{S}^{[2]}_{\theta} = \left\{
\begin{aligned}
    \ket{\Psi_{1,2}} &= \ket{\theta_+\theta_+}\ket{\theta_+\theta_-}, \\
    \ket{\Psi_{2,1}} &= \ket{\theta_+\theta_+}\ket{\theta_-\theta_+}, \\
    \ket{\Psi_{1,4}} &= \ket{\theta_+\theta_-}\ket{\theta_+\theta_-}, \\
    \ket{\Psi_{1,3}} &= \ket{\theta_+\theta_-}\ket{\theta_+\theta_+}, \\
    \ket{\Psi_{2,3}} &= \ket{\theta_+\theta_-}\ket{\theta_-\theta_+}, \\
    \ket{\Psi_{2,4}} &= \ket{\theta_+\theta_-}\ket{\theta_-\theta_-}, \\
    \ket{\Psi_{3,1}} &= \ket{\theta_-\theta_+}\ket{\theta_+\theta_+}, \\
    \ket{\Psi_{3,2}} &= \ket{\theta_-\theta_+}\ket{\theta_+\theta_-}, \\
    \ket{\Psi_{4,1}} &= \ket{\theta_-\theta_+}\ket{\theta_-\theta_+}, \\
    \ket{\Psi_{4,2}} &= \ket{\theta_-\theta_+}\ket{\theta_-\theta_-}, \\
    \ket{\Psi_{3,4}} &= \ket{\theta_-\theta_-}\ket{\theta_+\theta_-}, \\
    \ket{\Psi_{4,3}} &= \ket{\theta_-\theta_-}\ket{\theta_-\theta_+}
\end{aligned}
\right\}_{A_1A_2|B_1B_2}
\end{equation}

Since the set $\mathcal{S}$ is globally $2$-antidistinguishable \cite{PhysRevResearch.5.023094} from the $A_1A_2|B_1B_2$ partition of $S_{PBR}^{[2]}$, we can see the local systems of both parties are $\{\ket{\theta_+}\ket{\theta_+}, \ket{\theta_+}\ket{\theta_-}, \ket{\theta_-}\ket{\theta_+}, \ket{\theta_-}\ket{\theta_-}\}$. Let Alice start the protocol: by performing this global measurement $\{\Pi_A,\Pi_B,\dots,\Pi_F\}$ as given below,
\begin{align*}
\Pi_A &= \gamma \, \ketbra{+\bar{\theta},0}{+\bar{\theta},0}, \\
\Pi_B &= \gamma \, \ketbra{0,+\bar{\theta}}{0,+\bar{\theta}}, \\
\Pi_C &= \gamma \, \ketbra{0,-\bar{\theta}}{0,-\bar{\theta}}, \\
\Pi_D &= \gamma \, \ketbra{-\bar{\theta},0}{-\bar{\theta},0}, \\
\Pi_E &= \alpha \, \ketbra{\psi_{01}^{+}}{\psi_{01}^{+}} 
      + \beta \, \ketbra{\psi_{cs}^{+}}{\psi_{cs}^{+}}, \\
\Pi_F &= \alpha \, \ketbra{\psi_{01}^{-}}{\psi_{01}^{-}} 
      + \beta \, \ketbra{\psi_{cs}^{-}}{\psi_{cs}^{-}}.
\end{align*}
where, $\beta =\frac{1}{2\cos^4\theta}\ ,\ \gamma=\frac{1-\tan^4\theta}{4\sin^2\theta}\ ,\ \alpha = \frac{1}{2}-\gamma\cos^2\theta\ ,\ \ket{\pm\bar{\theta}}=\sin\theta\ket{0}\mp\cos\theta\ket{1},\ket{\psi^\pm_{01}}=\ket{00}\pm\ket{11}\ \text{and}\ \psi_{cs}^\pm=\sin^2\theta\ket{0}\pm\cos^2\theta\ket{1}$
For each outcome Alice eliminates the states that belong to one of the respective pairs
\begin{align*}
A &= \{++, +-\}, & B &= \{++, -+\}, \\
C &= \{+-, --\}, & D &= \{-+, --\}, \\
E &= \{+-, -+\}, & F&= \{++, --\}.
\end{align*}
Following this she can always eliminate $6$ states from $S_{\theta}^{[2]}$. Further, Bob also performs the same global measurement on his end and can eliminate at least two more states. Thus, $S_{\theta}$ is $(2,8)$-LSAM.

  \end{proof}

Much like the previous examples, the set \begin{align}\label{NL_1}
\mathcal{S}_{\text{NL}_1} \equiv \{\ket{0}_A\ket{+}_B,\ket{+}_A\ket{0}_B, \ket{i_+}_A\ket{i_+}_B\} 
 \end{align}
 suffers the same fate: although it showed nonlocality in the LSAD paradigm, it is in fact locally antimarkable.

\begin{proposition}\label{prop7}
    The set of states $\mathcal{S}_{NL_1}$  allows $(2,1)$-LSAM.
\end{proposition}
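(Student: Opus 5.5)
The plan is to exhibit an explicit LOCC protocol for the $n=2$ antimarking task, along the lines of the proof of Proposition~\ref{ex-PBR}. Write $\ket{\psi_1}=\ket{0}_A\ket{+}_B$, $\ket{\psi_2}=\ket{+}_A\ket{0}_B$, $\ket{\psi_3}=\ket{i_+}_A\ket{i_+}_B$. The child ensemble $\mathcal{S}^{[2]}_{\text{NL}_1}$ consists of the six product states $\ket{\Psi_{i,j}}=\ket{a_ia_j}_{A_1A_2}\ket{b_ib_j}_{B_1B_2}$ with $i\neq j$. Here $(a_1,a_2,a_3)=(0,+,i_+)$ are Alice's local vectors and $(b_1,b_2,b_3)=(+,0,i_+)$ are Bob's. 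For $(2,1)$-LSAM it suffices that every joint outcome is orthogonal to at least one of the six $\ket{\Psi_{i,j}}$. In keeping with the strong notion used throughout, I would also arrange that each sequence is excluded by some outcome.

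\textbf{Step 1: rule out naive measurements.} I would first check that local product measurements on each copy separately cannot work. For example, Alice could measure $A_1$ in $\{\ket{0},\ket{1}\}$ and $A_2$ in $\{\ket{+},\ket{-}\}$. Outcome $1$ on $A_1$ excludes $i_1=1$, and outcome $-$ on $A_2$ excludes $i_2=2$. However, the joint outcome $\ket{0}\ket{+}$ is exactly $\ket{a_1a_2}$, so it is orthogonal to none of Alice's six local vectors. Adding Bob's analogous measurement leaves a residual branch with nonzero probability in which nothing is excluded. Hence, just as for $\mathcal{S}_{\text{PBR}}$, the protocol must use an entangled measurement across the two copies held by one party. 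This is the mechanism by which non-repetition of the indices activates locality.

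\textbf{Step 2: construct a PBR-type basis for Alice.} I would look for an orthonormal basis $\{\ket{\xi_k}\}_{k=1}^4$ of $A_1A_2$ such that each $\ket{\xi_k}$ is orthogonal to some $\ket{a_{\sigma(k)}a_{\tau(k)}}$ with $\sigma(k)\neq\tau(k)$. The constraint ``orthogonal to $\ket{a\,b}$'' defines the $3$-dimensional subspace $\ket{a^\perp}\otimes\mathbb{C}^2+\mathbb{C}^2\otimes\ket{b^\perp}$. So the task is to pick four mutually orthogonal vectors, one from each of four chosen such subspaces.

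The natural first attempt mirrors the PBR basis. Take the four ordered pairs $(1,2),(2,1),(1,3),(3,1)$, that is, the constraints $\perp\ket{0+}$, $\perp\ket{+0}$, $\perp\ket{0\,i_+}$ and $\perp\ket{i_+0}$. Make the ansatz $\ket{\xi_k}=\alpha_k\ket{u_k^\perp v_k}+\beta_k\ket{u_k v_k^\perp}$ for the relevant pair $(u_k,v_k)$, and solve the orthogonality equations $\braket{\xi_k|\xi_l}=0$ for the free parameters.

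If a full orthonormal basis fails to exist, I would relax to a POVM. The weighted rank-one elements can be spread over all six pair constraints, so that $\sum_k\mu_k\ketbra{\xi_k}{\xi_k}=\mathbb{I}$. Dimension counting (six $3$-dimensional subspaces in $\mathbb{C}^4$, mutually in general position) suggests such a completion exists, and it would also cover the strong-antidistinguishability requirement for every sequence.

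\textbf{Step 3: fallback via Bob.} If Alice alone cannot cover every outcome, I would let her measure first and announce her result. Bob would then perform the analogous entangled measurement on $B_1B_2$, targeting only the sequences still compatible with Alice's residual outcome. The sets of Alice's and Bob's local vectors coincide up to swapping the labels $1\leftrightarrow2$, so Bob's measurement can be obtained from Alice's by this relabelling. The final step is a routine table, in the style of Table~\ref{Table2}, listing for each joint outcome an excluded sequence $(i,j)$. This table confirms that every outcome excludes at least one sequence and that every sequence appears as excluded somewhere.

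The main obstacle is Step~2: actually solving for mutually orthogonal vectors satisfying the pair-orthogonality constraints, given the non-real overlaps involving $\ket{i_+}$. I would attack it first numerically or by the symmetric ansatz above, and only then write out the closed form.
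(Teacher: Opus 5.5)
Your reduction is the same as the paper's. It suffices that one party, say Alice, antidistinguishes her six local vectors $\ket{a_ia_j}$, with $i\neq j$ and $(a_1,a_2,a_3)=(\ket{0},\ket{+},\ket{i_+})$, by a joint measurement on $A_1A_2$. But the proposal never shows that such a measurement exists, and that existence is the whole content of the proposition. Step 2 stops at ``attack it numerically''. The POVM fallback rests on dimension counting, which settles nothing here. Whether $\mathbb{I}$ lies in the sum of the cones of positive operators supported on the chosen orthogonal complements is a cone-membership question, and general position does not decide it. For example, on a single qubit the complements of $\ket{0},\ket{+},\ket{i_+}$ are three lines in general position. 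Yet no POVM can be built from them, because the Bloch vectors $-\hat z,-\hat x,-\hat y$ are linearly independent, so no nonzero nonnegative combination of them vanishes. Step 3 relies on the same unproved existence claim.

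The missing idea is to compute overlaps rather than solve for a basis. Every pair from $\{\ket{0},\ket{+},\ket{i_+}\}$ has squared overlap $\tfrac12$, so two of Alice's vectors that differ in both slots have squared overlap $\tfrac14$. The cyclic triples $\{(1,2),(2,3),(3,1)\}$ and $\{(1,3),(2,1),(3,2)\}$ consist of such vectors. Each therefore meets the Caves--Fuchs--Schack conditions: $\tfrac34<1$ and $(\tfrac14)^2=4(\tfrac14)^3$, the boundary case. Averaging the two antidistinguishing measurements via Theorem~\ref{theorem1} gives Alice a measurement that excludes a sequence in every outcome and every sequence in some outcome; this is the paper's proof. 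Your own ansatz also closes. The vectors
\begin{align*}
\xi_1&=\tfrac12\left(\ket{00}-\ket{01}+\iota\ket{10}+\ket{11}\right), & \xi_2&=\tfrac12\left(\ket{00}+\iota\ket{01}-\ket{10}-\ket{11}\right),\\
\xi_3&=\tfrac12\left(\iota\ket{00}+\ket{01}+\iota\ket{10}-\iota\ket{11}\right), & \xi_4&=\tfrac12\left(\iota\ket{00}+\iota\ket{01}+\ket{10}+\iota\ket{11}\right)
\end{align*}
are orthonormal and orthogonal to $\ket{0+},\ket{+0},\ket{0i_+},\ket{i_+0}$ respectively. So Alice measuring in this basis excludes $(1,2),(2,1),(1,3),(3,1)$. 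Since $(b_1,b_2,b_3)=(\ket{+},\ket{0},\ket{i_+})$, Bob using the same basis on $B_1B_2$ excludes $(2,1),(1,2),(2,3),(3,2)$. Letting a shared coin decide who measures then covers all six sequences. One of these two verifications has to be written down for the proposal to become a proof.
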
 
\begin{proof}
    Since, the local parts $\mathcal{S}_{N_1}^{[2]}$ are:
\begin{align*}
(S_{N_1}^{[2]})_A=(S_{N_1}^{[2]})_B = \left\{ 
\begin{aligned}
    &\ket{\psi_1}=\ket{0+}, \quad
    \ket{\psi_2}=\ket{0i_+}, \\
    &\ket{\psi_3}=\ket{+0}, \quad
    \ket{\psi_4}=\ket{+i_+}, \\
    &\ket{\psi_5}=\ket{i_+\,0}, \quad
    \ket{\psi_6}=\ket{i_+\,+}
\end{aligned}
\right\}
\end{align*}
Now, define the subsets:
\begin{align*}
S_1 &= \{ \ket{\psi_1}, \ket{\psi_4}, \ket{\psi_5} \}\\
S_2 &= \{ \ket{\psi_2}, \ket{\psi_3}, \ket{\psi_6} \} 
\end{align*}
Both $S_1\ \&\ S_2$ are antidistinguishability since they satisfy the \cref{eq:caves1,eq:caves2} therefore by Theorem \ref{theorem1} $(S_{N_1}^{[2]})_A\ \&\ (S_{N_1}^{[2]})_B$ are antidistingusihable which implies by \cite{manna2026nonlocalityentanglementexclusionquantum} that $S_{N_1}^{[2]}$ is anti-distinguishable i.e. $S_{N_1}$ is $(2,1)$-antimarkable.\end{proof}

Moving on, we point out that having established the inequivalence of the LSAD and CLSD frameworks through the examples $\mathcal{S}_{\text{NL}_1}$ and $\mathcal{S}^{\uparrow\hspace{-.05cm}\downarrow}$, it is now instructive to examine the relationship of LSAM with the conclusive state marking (CLSM) paradigm \cite{CLSM}, wherein states which do not allow CLSM are proven to  show a stronger form of nonlocality than CLSD.

The product states by Duan $\mathcal{S}_{\text{D}}$ , which is already CLSD nonlocal, is also $2$-CLSM nonlocal. However, even though this set of states showed nonlocality in the LSAD paradigm (from Proposition \ref{duan}), we find that it loses its nonlocality in the LSAM paradigm. 

\begin{proposition}\label{Duan_NLSAM}
    The Duan states $\mathcal{S}_{\text{D}}$  show no nonlocality in the $(2,1)-$LSAM task.
\end{proposition}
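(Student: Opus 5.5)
The plan is to exploit the product structure of the sequences, as in the proof of Proposition~\ref{prop7}. We reduce $(2,1)$-LSAM for $\mathcal{S}_{\text{D}}$ to the (global) antidistinguishability of a single party's local ensemble. Write Alice's local vectors as $a_1=\ket{0},a_2=\ket{1},a_3=\ket{+},a_4=\ket{i_+}$ and Bob's as $b_1=\ket{0},b_2=\ket{1},b_3=\ket{+},b_4=\ket{i_-}$. In the partition $A_1A_2|B_1B_2$ the child states are
\[
\ket{\Psi_{i,j}}=\ket{a_ia_j}_{A_1A_2}\ket{b_ib_j}_{B_1B_2},\qquad i\neq j .
\]
So Alice's local ensemble is $\mathcal{A}:=\{\ket{a_ia_j}\}_{i\neq j}$, a set of $12$ distinct two-qubit states, in bijection with the $12$ sequences.

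Suppose $\mathcal{A}$ is antidistinguishable. Then Alice alone performs the antidistinguishing POVM on $A_1A_2$. An outcome that excludes $\ket{a_ia_j}$ certifies that the sequence $(i,j)$ was not supplied, and she announces $(i,j)$. The bijection means every sequence is excluded by some outcome, so the strong condition holds. This is exactly the product-state reduction of \cite{manna2026nonlocalityentanglementexclusionquantum} already invoked in Proposition~\ref{prop7}. Hence it suffices to prove that $\mathcal{A}$ is antidistinguishable.

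For that, I will cover $\mathcal{A}$ by small antidistinguishable subsets and invoke Theorem~\ref{theorem1}. Orthogonal pairs are trivially antidistinguishable, and the following cover most of $\mathcal{A}$ (the first factor or the second factor is orthogonal in each pair):
\[
\{\ket{01},\ket{10}\},\ \{\ket{0+},\ket{1\,i_+}\},\ \{\ket{0\,i_+},\ket{1+}\},\ \{\ket{+0},\ket{i_+1}\},\ \{\ket{+1},\ket{i_+0}\}.
\]
The two remaining states are $\ket{+\,i_+}$ and $\ket{i_+\,+}$, which have no orthogonal partner in $\mathcal{A}$. I will put them in a triple with $\ket{01}$ and apply the Caves criterion \eqref{eq:caves1}--\eqref{eq:caves2}. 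The pairwise overlaps are all $1/4$, so the sum is $3/4<1$. For the second condition, $(1/4)^2=1/16\geq 4\cdot(1/4)^3=1/16$, which holds with equality. The triple is therefore (strongly) antidistinguishable. The union of all these subsets is $\mathcal{A}$, so Theorem~\ref{theorem1} gives antidistinguishability of $\mathcal{A}$, and hence $(2,1)$-LSAM for $\mathcal{S}_{\text{D}}$.

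The main obstacle is the borderline triple. Equality in \eqref{eq:caves2} must still count as strong antidistinguishability. The criterion is stated with ``$\geq$'', but I would double-check that each outcome of the resulting measurement retains nonzero probability. If equality causes trouble, the fallback is to pick a different third partner, for instance $\ket{10}$ or $\ket{0+}$. Alternatively, I would apply the same cover argument to Bob's ensemble, where $i_-$ replaces $i_+$ and the overlaps change.
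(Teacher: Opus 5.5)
Your proof is correct. It follows the paper's skeleton: reduce to one party's $12$-element local ensemble, cover it by antidistinguishable subsets, invoke Theorem~\ref{theorem1}, and use the product structure so that this party excludes a sequence alone. What differs is the cover.

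The paper uses the four triples $S'_1,\dots,S'_4$. Each contains an orthogonal pair, so \eqref{eq:caves2} holds trivially and \eqref{eq:caves1} holds with slack, since the nonzero overlaps always sum to $1/2$. You instead use five orthogonal pairs plus the triple $\{\ket{01},\ket{+\,i_+},\ket{i_+\,+}\}$, which saturates \eqref{eq:caves2}.

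That borderline is harmless, for three reasons:
\begin{itemize}
\item The condition is non-strict; the symmetric trine saturates it too.
\item It depends only on the moduli of the overlaps. Concretely, the Gram matrix splits as $G^{(1)}+G^{(2)}+G^{(3)}$, where $G^{(k)}\succeq 0$ is supported on the two indices other than $k$ and has diagonal entries $1/2$. This is possible because every off-diagonal modulus is $1/2$, and it yields a POVM whose $k$-th outcome never fires on the $k$-th state.
\item No two states of your triple are orthogonal, so no outcome of an antidistinguishing POVM can be idle. The strong version therefore holds automatically.
\end{itemize}
Your cover is thus slightly more economical, while the paper's is robust.

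One caution about your fallbacks. Using $\ket{10}$ as the third partner is exactly as borderline as $\ket{01}$. Using $\ket{0+}$ fails outright: the overlaps are $1/4,1/2,1/4$, which sum to $1$. The safe fallback is the paper's device of attaching each leftover state to an orthogonal pair, e.g.\ $\{\ket{01},\ket{10},\ket{+\,i_+}\}$.

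You are also right that Bob's vectors involve $\ket{i_-}$. The paper's identification of the two local ensembles is loose, but this does not matter, since only one of them is used.
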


\begin{proof}
To see this, first observe that both Alice and Bob have the following set of local states:
\begin{align*}
    (\mathcal{S}_D^{[2]})_A=(\mathcal{S}_D^{[2]})_B\equiv
    \left\{
    \begin{aligned}
        &\ket{D_{12}}=\ket{01},\quad\ket{D_{13}}=\ket{0+},\\
        &\ket{D_{14}}=\ket{0i_+},\quad\ket{D_{21}}=\ket{10},\\
        &\ket{D_{23}}=\ket{1+},\quad\ket{D_{24}}=\ket{1i_+},\\
        &\ket{D_{31}}=\ket{+0},\quad\ket{D_{32}}=\ket{+1},\\
        &\ket{D_{34}}=\ket{+i_+},\quad\ket{D_{41}}=\ket{i_+0},\\
        &\ket{D_{42}}=\ket{i_+1},\quad\ket{D_{43}}=\ket{i_++}
    \end{aligned}
    \right\}
\end{align*}
Now we define the sets 
\begin{align*}
    S'_1 &= \{\ket{D_{12}},\ \ket{D_{21}},\ \ket{D_{34}}\}, & S'_2 &= \{\ket{D_{13}},\ \ket{D_{23}},\ \ket{D_{42}}\}, \\
    S'_3 &= \{\ket{D_{14}},\ \ket{D_{24}},\ \ket{D_{43}}\}, & S'_4 &= \{\ket{D_{41}},\ \ket{D_{31}},\ \ket{D_{32}}\}.
\end{align*}
All of the sets $S_i'$ is antidistinguishable since they satisfy \cref{eq:caves1,eq:caves2} therefore by Thoorem \ref{theorem1} $(\mathcal{S}_D^{[2]})_A\ \&\ (\mathcal{S}_D^{[2]})_B$ is antidistinguishable which implies $\mathcal{S}_D$ is $(2,1)$-LSAM\cite{manna2026nonlocalityentanglementexclusionquantum}. 
\end{proof}

We also know that from lemma \ref{lemma n to n'}, although the ensemble $\mathcal{S}^{\uparrow\hspace{-.05cm}\downarrow}$ does not admit CLSD, and also not $2$-CLSM \cite{CLSM}, it remains locally antidistinguishable (LSAD) and, by extension, allows for LSAM.

\begin{figure}[t!]
\centering
\includegraphics[width=0.5\textwidth]{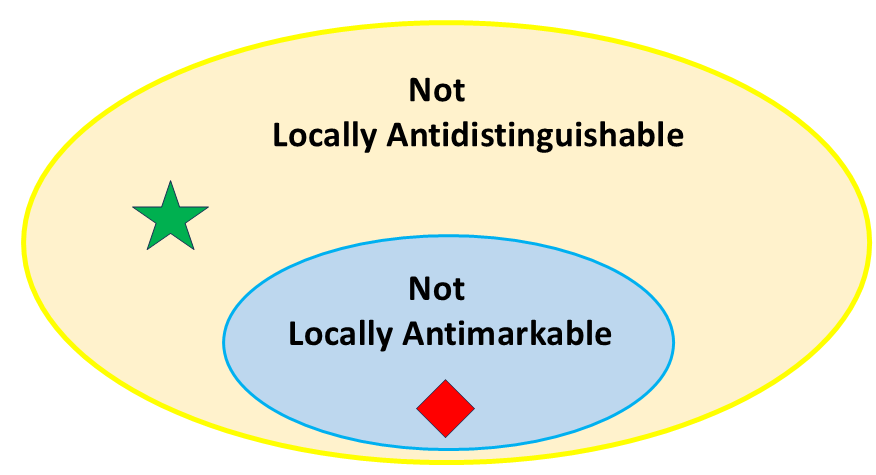}
\caption{A strict hierarchy was shown between LSAD and LSAM in Proposition \ref{prop7} is represented by symbol $\bigstar$ which denotes a set of quantum states that is not locally antidistinguishable but is locally $(2,1)$-antimarkable. In contrast, Theorem \ref{P_NL_tri} corresponds to symbol $\blacklozenge$ representing a set of states that are neither LSAD nor (2,1)-LSAM.}\label{LSAMfig2}
\end{figure}

\begin{figure}[b!]
\centering
\includegraphics[width=0.5\textwidth]{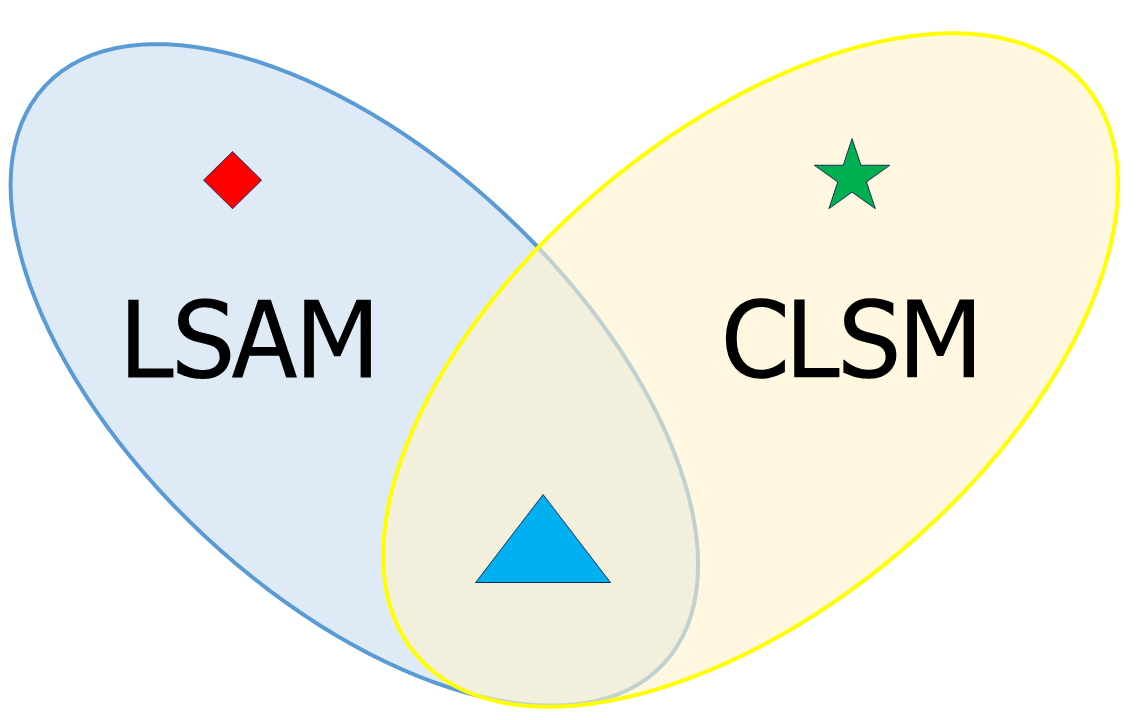}
\caption{A no hierarchy between LSAM and CLSM was shown in Proposition \ref{Duan_NLSAM} and Theorem \ref{P_NL_tri} represented by symbols $\blacklozenge$ and $\bigstar$ respectively. In contrast as shown in Proposition \ref{prop7}, the set of quantum states $\mathcal{S}_{\text{NL}_1}$ is an example which allows both CLSM and LSAM and is represented as $\blacktriangle$.}\label{LSAMfig3}
\end{figure}

Consider the following set of tripartite product states: 
\begin{align}\label{NLSAM_tripartite}
    {S}_{NL_2}\equiv\left\{
    \begin{aligned}
         \ket{\psi_1}=\ket{0}&\ket{0}\ket{0}\ ,\ \ket{\psi_2}=\ket{0}\ket{\theta}\ket{\theta},\\
         &\ket{\psi_3}=\ket{\theta}\ket{0}\ket{\theta}
    \end{aligned}
   \right\}
\end{align}
 where $\ket{\theta} = \cos\theta\ket{0}+\sin\theta\ket{1}$

\begin{theorem}\label{P_NL_tri}
    The set $\mathcal {S}_{NL_2}$(\ref{NLSAM_tripartite}) of tripartite quantum states, is globally antidistinguishable but it does not allow $(2,1)$-LSAM for $\theta\in\big[\frac{\pi}{4},\cos^{-1}(\frac{1}{\sqrt{3}})\big]\cup\big[\cos^{-1}(\frac{1}{\sqrt{3}}),\frac{3\pi}{4}\big]$. It also allows CLSD and thus CLSM \cite{CLSM}. Hence, the antimarking paradigm reveals its nonlocality, which was suppressed in the CLSD and the CLSM paradigms.
\end{theorem}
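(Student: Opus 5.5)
Three claims have to be checked for $\mathcal{S}_{NL_2}=\{\ket{000},\ket{0\theta\theta},\ket{\theta0\theta}\}$: global antidistinguishability, failure of $(2,1)$-LSAM on the stated $\theta$-range, and local conclusive distinguishability.

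\emph{Global antidistinguishability.} Use the Caves criterion \eqref{eq:caves1}--\eqref{eq:caves2}. Write $c=\cos\theta$. Then
\[
x_{12}=x_{13}=c^4, \qquad x_{23}=c^4 .
\]
The last value follows from $\langle 0\theta\theta|\theta0\theta\rangle=c\cdot c\cdot 1$. With $s=3c^4$, condition \eqref{eq:caves1} requires $c^4<1/3$. Condition \eqref{eq:caves2} reads $(1-3c^4)^2\ge 4c^{12}$, i.e. $1-3c^4\ge 2c^6$. Both are elementary in $c^2$. I expect them to pick out an interval around $\pi/2$, and I will check that this interval contains the stated range; the endpoints $\cos^{-1}(1/\sqrt3)$ and $\pi/4$ are suggestive.

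\emph{No $(2,1)$-LSAM.} The six child sequences $\ket{\psi_i}\ket{\psi_j}$, $i\neq j$, are product across the three parties $A_1A_2|B_1B_2|C_1C_2$. Their local components are:
\begin{align*}
&\text{Alice: } \ket{00},\ \ket{0\theta},\ \ket{\theta0} \ \text{(each twice)}, \\
&\text{Bob: } \ket{0\theta},\ \ket{\theta0},\ \ket{\theta\theta}, \\
&\text{Charlie: } \ket{0\theta},\ \ket{\theta0},\ \ket{\theta\theta}.
\end{align*}
Following the criterion used in Propositions~\ref{ex-PBR} and~\ref{prop7}, attributed to \cite{manna2026nonlocalityentanglementexclusionquantum}, a product ensemble fails LSAD if no party's local ensemble is antidistinguishable. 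I therefore want to show that none of these local sets (with repetitions removed) is antidistinguishable.

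Each party's local set is a triple of two-qubit product states. Alice's triple has pairwise overlaps $c^2, c^2, c^4$. Bob's and Charlie's triples also have overlaps $c^4, c^2, c^2$. For example, $|\langle0\theta|\theta0\rangle|^2=c^4$ and $|\langle0\theta|\theta\theta\rangle|^2=c^2$. Condition \eqref{eq:caves1} then becomes $2c^2+c^4<1$, i.e. $c^2<\sqrt2-1$. Condition \eqref{eq:caves2} gives a second polynomial inequality in $c^2$. I will show that on the stated range at least one of the two fails, so no party can exclude. I will then argue, as in the cited criterion, that any LOCC protocol must begin with some party making a nontrivial exclusion-compatible measurement, which is impossible.

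\emph{Main obstacle.} This step is the hardest, because the cited criterion is really about a first measuring party, and repeated local components complicate the argument. I would verify that any nontrivial local POVM element annihilating a child state must annihilate that party's local factor, and that the Caves failure forbids a POVM with each local vector excluded. I will also double-check the $\theta$-range: on the naive numbers, $c^2\ge 1/2$ at $\pi/4$ seems to violate \eqref{eq:caves1} globally. So I must recompute carefully and possibly reinterpret the interval, likely by symmetry about $\pi/2$ using $|c|$.

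\emph{CLSD.} For conclusive local distinguishability, the three states are linearly independent, so they are globally conclusively distinguishable. Locally, I would let each party measure $\{\ketbra{1}{1}/\kappa,\ \ketbra{\theta^\perp}{\theta^\perp}/\kappa,\ \text{rest}\}$, analogously to the measurement used for $\mathcal{S}_{NL_1}$:
\begin{itemize}
\item Outcome $\theta^\perp$ on Alice together with $\theta^\perp$ on Bob rules out everything except $\ket{000}$, so it identifies $\ket{\psi_1}$.
\item Outcomes $1$ on Bob and $1$ on Charlie identify $\ket{\psi_2}$.
\item Outcome $1$ on Alice identifies $\ket{\psi_3}$.
\end{itemize}
CLSM then follows from \cite{CLSM}.
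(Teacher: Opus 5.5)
\textbf{Gaps in the two open steps.} Your route is the paper's: Caves' criterion on the parent triple for global antidistinguishability, Caves' criterion on each party's local triple of $\mathcal{S}_{NL_2}^{[2]}$ combined with the product-state criterion of \cite{manna2026nonlocalityentanglementexclusionquantum}, and an explicit local conclusive protocol. But the two steps you left open are where the proposal breaks.

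\emph{The inequalities and the range.} The inequalities must actually be solved, and doing so shows your plan to show failure ``on the stated range'' cannot succeed literally. Put $u=\cos^2\theta$. Globally, \eqref{eq:caves2} is $1-3u^2-2u^3=-(1+u)^2(2u-1)\ge 0$, i.e.\ $u\le 1/2$, and \eqref{eq:caves1} then holds automatically. Locally, given \eqref{eq:caves1}, \eqref{eq:caves2} is $1-2u-u^2\ge 2u^2$, i.e.\ $(3u-1)(u+1)\le 0$, i.e.\ $u\le 1/3$. So the claim holds exactly for $1/3<\cos^2\theta\le 1/2$, i.e.\ $\theta\in[\pi/4,\cos^{-1}(1/\sqrt3))\cup(\cos^{-1}(-1/\sqrt3),3\pi/4]$.

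Your worry about $\pi/4$ is unfounded: there $3c^4=3/4<1$, with equality in \eqref{eq:caves2}. The real defect is in the statement. Its second interval contains $\pi/2$, where Alice's local set $\{\ket{00},\ket{01},\ket{10}\}$ is orthonormal. Its closed endpoint $\cos^{-1}(1/\sqrt3)$ is also wrong, since there the local triple is antidistinguishable with equality in \eqref{eq:caves2}. The $|\cos\theta|$ reinterpretation you anticipated is the needed fix. The paper's displayed local inequality also has a sign typo ($+\cos^4\theta$ for $-\cos^4\theta$), though its stated solution is right.

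\emph{Bob's local set.} It is $\{\ket{00},\ket{0\theta},\ket{\theta0}\}$, the same as Alice's, not $\{\ket{0\theta},\ket{\theta0},\ket{\theta\theta}\}$. The reason is that $\ket{\psi_1}$ and $\ket{\psi_3}$ share Bob-component $\ket{0}$, and $\ket{\theta\theta}$ would need the forbidden pair $(2,2)$. The overlaps coincide, so this slip is harmless.

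\emph{The LOCC reduction.} The paper takes ``no local triple antidistinguishable $\Rightarrow$ no $(2,1)$-LSAM'' from \cite{manna2026nonlocalityentanglementexclusionquantum}. Your sketched justification does not work: an LOCC protocol need not begin with an exclusion-compatible measurement. It may open with one that excludes nothing and only reshapes the local states for later rounds.

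The argument that works follows a bad branch. Suppose the measuring party's current local set $\{v_m\}$ is not \emph{weakly} antidistinguishable and it uses Kraus operators $\{K_k\}$. Then some $k$ has $K_kv_m\neq 0$ for all $m$, with $\{K_kv_m\}$ still not weakly antidistinguishable. Otherwise you could collect $K_k^\dagger K_k$ (when $K_k$ kills some $v_m$) and $K_k^\dagger F^k_jK_k$ (when $\{F^k_j\}$ antidistinguishes $\{K_kv_m\}$), and these would antidistinguish $\{v_m\}$. Iterating through the rounds gives a leaf that every sequence reaches with positive probability and that excludes none.

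The base case needs failure of weak antidistinguishability. Caves' criterion supplies this here only because the triples are pairwise non-orthogonal: a weak exclusion POVM with an outcome vanishing on their span would force two of them to be orthogonal.

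\emph{CLSD.} Your protocol is valid if you take $\kappa\ge 1+|\cos\theta|$. Charlie is superfluous, since Bob's outcome $\ket{1}$ already kills $\ket{\psi_1}$ and $\ket{\psi_3}$. It differs from the paper's projective protocol only cosmetically: there Alice and Bob measure in the computational basis and Charlie in $\{\ket{\theta},\ket{\theta^\perp}\}$, with his outcome $\theta^\perp$ identifying $\ket{\psi_1}$.
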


\begin{proof}
    Calculating the pairwise overlap of the states in $\mathcal {S}_{NL_2}$ i.e. $x_1=|\braket{\psi_1|\psi_2}|^2,x_2=|\braket{\psi_2|\psi_3}|^2,x_3=|\braket{\psi_3|\psi_1}|^2$ we get $x_1= x_2 = x_3= \cos^4\theta$. So , from \cref{eq:caves1,eq:caves2} we get $\mathcal{S}_{NL}$ is antidistinguishable if and only if :
    \begin{align*}
        3\cos^4\theta&<1\\
        (1-3\cos^4\theta)^2&\geq4\cos^{12}\theta
    \end{align*}
    The above two equations simultaneously satisfied in the region $\frac{\pi}{4}<\theta<\frac{3\pi}{4}$

    $\mathcal {S}_{NL_2}$ is $(2,1)$-LSAM if and only if at least one local part $\mathcal {S}_{NL_2}^{[2]}$ is antidistinguishable. The local part for Alice and Bob for $\mathcal {S}_{NL_2}^{[2]}$ is $\{\ket{00},\ket{0\theta},\ket{\theta0}\}$ and that for Charlie is $\{\ket{0\theta},\ket{\theta0},\ket{\theta\theta}\}$ again using the conditions \cref{eq:caves1,eq:caves2} we get they are antidistinguishable if and only if:
    \begin{align*}
        2\cos^2\theta+\cos^4\theta&<1\\
        (1-2\cos^2\theta+\cos^4\theta)^2&\geq4\cos^{8}\theta
    \end{align*}
    Solving the above two equation we get $\cos^{-1}(\frac{1}{\sqrt3})\leq \theta\leq\cos^{-1}(-\frac{1}{\sqrt3})$ . This conclude the proof that in the region $\theta\in\big[\frac{\pi}{4},\cos^{-1}(\frac{1}{\sqrt{3}})\big]\cup\big[\cos^{-1}(\frac{1}{\sqrt{3}}),\frac{3\pi}{4}\big]$.

    It is also easy to see that $\mathcal{S}_{NL_2}$ is locally conclusively distinguishable. The three parties use the following measurement strategy:
    $\mathcal{M}_A=\mathcal{M}_B:=\{\ket{0}\bra{0},\ket{1}\bra{1}\}$ and $\mathcal{M}_C = \{  \ket{\theta}\bra{\theta},\ket{\theta^{\perp}}\bra{\theta^{\perp}}\}$
    
\end{proof}
Hence, $\mathcal {S}_{NL_2}$ is locally conclusively distinguishable, which implies that it is locally conclusively markable \cite{CLSM}, but it shows the strongest form of nonlocality when it comes to antimarkability. Hence, this also establishes the inequivalence of CLSM and LSAM.

Moving to the bipartite regime, we encounter a particularly striking result. We find en example of a set of parent product states which are not even globally antidistinguishable, which typically ends any discussion of local behavior. However, the child states produced by the antimarking task behave quite differently: they are globally antidistinguishable, yet they fail to be locally antidistinguishable. This reveals a form of nonlocality that remains hidden in the parent states but is unlocked through the antimarking task.

\begin{align}\label{superactivation}
    {S}_{U}\equiv\left\{
    \begin{aligned}
         \ket{\psi_1}=\ket{0}&\ket{0}\ ,\ \ket{\psi_2}=\ket{0}\ket{+},\\
         &\ket{\psi_3}=\ket{+}\ket{0}
    \end{aligned}
   \right\}
\end{align}
\begin{theorem}\label{P_SuperActivation}
  The set $\mathcal{S}_{U}$ is not globally antidistinguishable. However, in the context of the $(2,1)$-LSAM task, while the sequences can be antimarked globally, it is not possible to do so  via LOCC.
\end{theorem}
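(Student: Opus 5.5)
The proof has three parts: global non-antidistinguishability of $\mathcal{S}_U$, global antidistinguishability of the child ensemble $\mathcal{S}_U^{[2]}$, and local non-antidistinguishability of that same child ensemble. Since $\mathcal{S}_U$ has only three pure states, the first part should follow directly from the criterion of Caves \textit{et al.}. The pairwise overlaps are
$x_{12}=|\langle 0|+\rangle|^2=\tfrac12$, $x_{13}=|\langle 0|+\rangle|^2=\tfrac12$ and $x_{23}=|\langle 0|+\rangle|^2|\langle +|0\rangle|^2=\tfrac14$. Their sum is $\tfrac54>1$, which violates Eq.~\eqref{eq:caves1}, so $\mathcal{S}_U$ is not (strongly) antidistinguishable.

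For the second part, $\mathcal{S}_U^{[2]}$ consists of the six states $\ket{\Psi_{ij}}=\ket{\psi_i}\otimes\ket{\psi_j}$ with $i\neq j$. Overlaps between child states factorise, $|\langle\Psi_{ij}|\Psi_{kl}\rangle|^2=x_{ik}x_{jl}$, where $x_{ii}=1$. I will split the six states into the two ``cyclic'' triples
$\mathcal{T}_1=\{\Psi_{12},\Psi_{23},\Psi_{31}\}$ and $\mathcal{T}_2=\{\Psi_{21},\Psi_{32},\Psi_{13}\}$.
For $\mathcal{T}_1$ the overlaps are $x_{12}x_{23}=\tfrac18$, $x_{13}x_{21}=\tfrac14$ and $x_{23}x_{31}=\tfrac18$. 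Their sum is $\tfrac12<1$, and $(1-\tfrac12)^2=\tfrac14\geq 4\cdot\tfrac18\cdot\tfrac14\cdot\tfrac18$, so both Caves conditions hold. The triple $\mathcal{T}_2$ has the same overlaps by symmetry. Since $\mathcal{T}_1\cup\mathcal{T}_2=\mathcal{S}_U^{[2]}$, Theorem~\ref{theorem1} gives global antidistinguishability of the sequences, i.e.\ global $(2,1)$-antimarkability.

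For the local impossibility, regard the six child states as product states across the cut $A_1A_2|B_1B_2$. Alice's local components are $\ket{00}$ (for $\Psi_{12},\Psi_{21}$), $\ket{0+}$ (for $\Psi_{13},\Psi_{23}$) and $\ket{+0}$ (for $\Psi_{31},\Psi_{32}$). Bob's components run over the same three vectors $\{\ket{00},\ket{0+},\ket{+0}\}$ in a different assignment. This three-element set has overlaps $\tfrac12,\tfrac12,\tfrac14$, exactly as for $\mathcal{S}_U$, so it again violates Eq.~\eqref{eq:caves1} and is not antidistinguishable. I then plan to invoke the criterion of Manna \textit{et al.} \cite{manna2026nonlocalityentanglementexclusionquantum}, already used in Theorem~\ref{P_NL_tri}: a product ensemble is locally antidistinguishable only if some party's local ensemble is antidistinguishable. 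This would conclude that $\mathcal{S}_U$ does not admit $(2,1)$-LSAM.

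The main obstacle is this last step. Each party's local ensemble contains repeated vectors, so one must check that the cited criterion applies to a labelled multiset of local states. It must also cover general multi-round LOCC, not just one-shot local measurements. If the citation does not directly cover this, I would argue from the other direction. Suppose a LOCC protocol existed. The first nontrivial local POVM element, say Alice's, has positive overlap with all of $\ket{00},\ket{0+},\ket{+0}$ unless it annihilates one of them. Iterating, either at some round a party's local effect conclusively excludes one of its local vectors, which amounts to a partial antidistinguishability of that local set, or no branch ever excludes anything. Ruling out the first alternative for every branch, rather than only for the whole local set, is the part of the argument I expect to need the most care.
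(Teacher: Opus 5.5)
Your proposal is correct and follows the paper's proof: Caves' criterion rules out antidistinguishability of $\mathcal{S}_U$; a cover of $\mathcal{S}_U^{[2]}$ by triples satisfying the Caves conditions, combined with Theorem~\ref{theorem1}, gives global antimarkability; and the marginal criterion of Manna \textit{et al.} applied to the local set $\{\ket{00},\ket{0+},\ket{+0}\}$ gives the LOCC impossibility. Your two cyclic triples are exactly the paper's $\mathcal{S}_2$ and $\mathcal{S}_3$, so its third triple $\mathcal{S}_1$ is redundant, and the worry you flag about the cited criterion is harmless here: that local set has no orthogonal pair, so it is not even weakly antidistinguishable, and that is what makes a round-by-round induction over the LOCC tree work (if every child of a measurement were weakly antidistinguishable, composing with the Kraus operators would make the parent so), showing some branch never excludes anything.
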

\begin{proof}
    Using \cref{eq:caves1,eq:caves2}, it is straightforward to see that $\mathcal{S}_{U}$ is not globally antidistinguishable. To prove that $\mathcal{S}_{U}$ shows nonlocality in the  $(2,1)$-LSAM task, we consider
    \begin{align*}
        \mathcal{S}^{[2]}_{U} =
        \left\{
        \begin{aligned} 
            \ket{\psi_{12}} &= \ket{00}\ket{0+},\ \ket{\psi_{13}} = \ket{0+}\ket{00},\\ 
            \ket{\psi_{21}} &= \ket{00}\ket{+0},\ \ket{\psi_{23}} = \ket{0+}\ket{+0},\\ 
            \ket{\psi_{31}} &= \ket{+0}\ket{00},\ \ket{\psi_{32}} = \ket{+0}\ket{0+}
        \end{aligned}
        \right\}_{A_1A_2|B_1B_2}
    \end{align*}

    Define the subsets $\mathcal{S}_1 = \{\ket{\psi_{12}}, \ket{\psi_{13}}, \ket{\psi_{21}}\},\quad
        \mathcal{S}_2 = \{\ket{\psi_{12}}, \ket{\psi_{23}}, \ket{\psi_{31}}\}\ \&\ \mathcal{S}_3 = \{\ket{\psi_{13}}, \ket{\psi_{21}}, \ket{\psi_{32}}\}$. Each of these subsets is antidistinguishable, as they satisfy conditions \cref{eq:caves1,eq:caves2}. Therefore, by Thorem \ref{theorem1}, it follows that $\mathcal{S}_{U}^{[2]}$ is antidistinguishable via global measurements.
        
        % , i.e. $\mathcal{S}_{SA}$ can be globally  $(2,1)$-antimarked.

    Finally, observe that the local sets associated with $\mathcal{S}_{U}^{[2]}$ are $\{\ket{00}, \ket{0+}, \ket{+0}\}$ for both Alice and Bob. Since none of these local sets are antidistinguishable via  Eqns. \cref{eq:caves1,eq:caves2} and therefore  it follows that $\mathcal{S}_{U}$ is not $(2,1)$-LSAM.
\end{proof} 

\subsection{Discussion and Conclusion}
In this work, we have unified local state antidistinguishability (LSAD) with local state marking (LSM) to introduce a new operational task: local state antimarking (LSAM). This framework is inherently lenient, rendering any ensemble of mutually orthogonal pure states completely local (Lemma \ref{lemma2}). We have also encountered an activation of nonlocality in product states through this task, in Proposition \ref{P_SuperActivation}. Furthermore, we demonstrate the operational inequivalence between CLSD and LSAD, as well as between CLSM and LSAM (See Figs. \ref{LSAMfig2} and \ref{LSAMfig3}). This inequivalence allows us to resolve physical differences that a single framework might miss. Two ensembles may appear identical in one context—both being local or both nonlocal—yet diverge significantly in another. For instance, a pair of state sets  classified as nonlocal under CLSD might be revealed as one local and one nonlocal under LSAD. This ability to resolve such ambiguities provides a powerful prescription for comparing the nonlocality of various product states, offering a more refined characterization than any single paradigm could provide.

Many interesting questions open up.  An intriguing open question is whether there exists a set of mutually orthogonal quantum states that cannot be locally antidistinguished; per Lemma \ref{lemma2}, any such states would necessarily have to be mixed. Secondly, our example of a set of states, which is globally antidistinguishable, yet displays nonlocality in the LSAM framework, is a tripartite example. It would be interesting to find a set of bipartite states which is globally antidistinguishable but does not allow LSAM.

\subsection*{Acknowledgment}
BC acknowledges support from University Grants Commission, India (Reference no. 241620129062). TG acknowledges financial support from the ANRF National Post-Doctoral Fellowship (NPDF) under File No. PDF/2025/005147. PG acknowledges the support from the
project entitled “Technology Vertical - Quantum Communication” under the National Quantum Mission of
the Department of Science and Technology (DST) (Sanction Order No. DST/QTC/NQM/QComm/2024/2
(G)). SS acknowledges financial support from the European Union (ERC StG ETQO, Grant Agreement no.\ 101165230). Views and opinions expressed are however those of the author(s) only and do not necessarily reflect those of the European Union or the European Research Council. Neither the European Union nor the granting authority can be held responsible for them.

\end{document}